\def\B{\{0,1\}}
\def\Xbar{\overline{X}}
\def\hbar{\overline{h}}
\providecommand\abs[1]{\lvert#1\rvert}
\providecommand\bigabs[1]{\bigl\lvert#1\bigr\rvert}
\providecommand\ip[1]{\langle#1\rangle}
\def\poly{\mathrm{poly}}
\theoremstyle{plain}
\newtheorem{theorem}{Theorem} 
\newtheorem{lemma}[theorem]{Lemma}
\newtheorem{proposition}[theorem]{Proposition}
\newtheorem{claim}[theorem]{Claim}
\theoremstyle{definition}
\def\hypref#1#2{\hyperref[#2]{#1~\ref*{#2}}}
\def\nc{\mathrm{NC}}
\DeclareMathOperator\pr{\mathrm{Pr}}
\DeclareMathOperator\E{\mathrm{E}}
\def\eps{\varepsilon}
\def\calD{\mathcal{D}}
\def\calH{\mathcal{H}}
\title{Sparse extractor families for all the entropy}
\author{Andrej Bogdanov\thanks{{\tt andrejb@cse.cuhk.edu.hk}. Department of Computer Science and Engineering and Institute for Theoretical Computer Science and Communications, Chinese University of Hong Kong. Work supported by grants RGC GRF CUHK410309 and CUHK410111.} \and Siyao Guo\thanks{{\tt syguo@cse.cuhk.edu.hk}. Department of Computer Science and Engineering, Chinese University of Hong Kong.}}
\date{}
\begin{document}
\maketitle

\begin{abstract}
We consider the problem of extracting entropy by sparse transformations, namely functions with a small number of overall input-output dependencies. In contrast to previous works, we seek extractors for essentially all the entropy without any assumption on the underlying distribution beyond a min-entropy requirement. We give two simple constructions of {\em sparse extractor families}, which are collections of sparse functions such that for any distribution $X$ on inputs of sufficiently high min-entropy, the output of most functions from the collection on a random input chosen from $X$ is statistically close to uniform.

For strong extractor families (i.e., functions in the family do not take additional randomness) we give upper and lower bounds on the sparsity that are tight up to a constant factor for a wide range of min-entropies. We then prove that for some min-entropies weak extractor families can achieve better sparsity.

We show how this construction can be used towards more efficient parallel transformation of (non-uniform) one-way functions into pseudorandom generators. More generally, sparse extractor families can be used instead of pairwise independence in various randomized or nonuniform settings where preserving locality (i.e., parallelism) is of interest.
\end{abstract}


\thispagestyle{empty}
\newpage
\setcounter{page}{1}

\section{Introduction}
Randomness extractors~\cite{NZ93, Sha04}, have numerous applications in complexity theory and cryptography. For example, in computational complexity they are used to isolate satisfying assignments of boolean formulas~\cite{VV86} and in constructing pseudorandom generators for space-bounded computation~\cite{Nis91, RR99}. One notable use in cryptography is in the construction of pseudorandom generators from one-way functions~\cite{HILL, HRV, VZ}.

In many applications of extractors, including the above ones, it is important that the extractors recover essentially all the entropy of the input distribution. A popular choice in such scenarios is to instantiate the extractor by a pairwise-independent hash function family~\cite{BBR88, ILL89}. Pairwise-independent functions are appealing because they have a variety of implementations, ranging from very simple ones~\cite{CW} to very efficient ones~\cite{IKOS}.

Mansour, Nisan, and Tiwari~\cite{MST} observed that pairwise-independent hash functions must be ``dense'' in the sense that a typical output in a typical function in the family must depend on a linear number of inputs. So despite their numerous nice properties, in terms of the number of input-output dependencies, pairwise-independent functions are quite complex. Motivated by an application to local cryptography, Bogdanov and Rosen~\cite{BR11} recently gave a way to bypass this barrier in the context of hardness amplification of ``local'' functions.

In this work we study sparse extractors for all the entropy, these are extractors with a small number of overall input-output dependencies. We consider the more general notion of sparse extractor families. An {\em extractor family} for distributions of min-entropy $k$ over $\B^n$ with error $\eps$ is a distribution $H$ on functions $\B^n \times \B^s \to \B^m$ where $m \leq s + k$ such that for every distribution $X$ over $\B^n$ of min-entropy $k$, the statistical distance between $(H, H(X, U_s))$ and $(H, U_m)$ is at most $\eps$ (where $U_s$ and $U_m$ are uniformly random). The extractor family is {\em strong} if $s = 0$, i.e. $H$ does not take any additional randomness beyond $X$.

Without the sparsity restriction extractors and extractor families are essentially the same object, as the randomness used to choose an extractor from the family can be included in the seed. Once we take sparsity into consideration, however, extractor families allow for more flexibility.  This advantage is especially pronounced in the case of strong extractors: Any single strong extractor in which some output bit depends only on $\ell$ input bits cannot extract from a source that fixes all those $\ell$ bits. In contrast, we show strong extractor families can achieve much better sparsity. 

In this work we prove three results regarding sparse extractor families. First we give a simple construction of sparse extractor families for all the entropy. Then we show that the sparsity of our construction is optimal up to constant factors for a wide range of the min-entropy parameter. Finally, we show that an equally simple construction of weak extractor families achieves better sparsity. Thus when sparsity is required, weak extractors can provably outperform strong ones. 

We also show our weak extractor family gives a somewhat improved nonuniform construction of local pseudorandom generators from local one-way functions, based on recent work of Vadhan and Zheng~\cite{VZ}.  In general our results can be useful in randomized or nonuniform settings where hashing is used and obtaining or preserving small input-output dependencies (i.e., parallelism) is of interest.

\subsection{Our results}

Let $h\colon \B^n \to \B^m$ be a function. We say output $j$ of $h$ {\em depends} on input $i$ if there exists assignments $x, x' \in \B^n$ that differ only in the $i$th coordinate such that $h(x)_j \neq h(x')_j$. We say $h$ is {\em $s$-sparse} if the number of input-output pairs $(i, j)$ such that output $j$ depends on input $i$ is at most $s$, and $h$ is {\em $\ell$-local} if every output $j$ depends on at most $\ell$ inputs $i$. 

\begin{theorem}
\label{thm:upper}
Let $K$ be a sufficiently large constant and $n, k, m, \delta$ be parameters such that $1 \leq m\leq k \leq n$, and $0 < \delta < 1$.  Let $H(x) = Mx$, where $M$ is an $m\times n$ matrix over $GF(2)$ where each entry equals $1$ independently with probability
       \[ p = \min\Bigl\{\frac{1}{m} \cdot \log\frac{m}{\delta}\ln\frac{Kn}{m} , \frac12 \Bigr\}. \]
Then $H$ is a strong extractor family for min-entropy $k$ with error at most $\sfrac12\sqrt{\delta + K\cdot\/2^{- k+m}}$.
\end{theorem}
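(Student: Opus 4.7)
The overall strategy is a leftover-hash-lemma-style argument via expected collision probability. By the Parseval/Cauchy--Schwarz bound $\Delta(Y,U_m)\leq \tfrac12\sqrt{2^m\,\mathrm{CP}(Y)-1}$ applied to $Y=MX$ and Jensen's inequality applied to $\mathbb{E}_M$ over the concave square root, we have
\[
\Delta\bigl((M,MX),(M,U_m)\bigr) \;=\; \mathbb{E}_M\bigl[\Delta(MX,U_m)\bigr] \;\leq\; \tfrac12\sqrt{\,2^m\,\mathbb{E}_M[\mathrm{CP}(MX)]-1\,}.
\]
It therefore suffices to prove $2^m\,\mathbb{E}_M[\mathrm{CP}(MX)]-1\leq \delta+K\cdot 2^{m-k}$.

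Next I compute the kernel. For a fixed nonzero $z$ of Hamming weight $w$, each coordinate of $Mz$ is a sum mod $2$ of $w$ independent $\mathrm{Bernoulli}(p)$'s, hence equals $0$ with probability $(1+(1-2p)^w)/2$; the rows of $M$ being independent then gives $\Pr_M[Mz=0]=((1+(1-2p)^w)/2)^m$. Writing $q_z = \Pr_{x,y\sim X}[x+y=z]=\sum_x\Pr[X=x]\Pr[X=x+z]$, we have $q_0=\mathrm{CP}(X)\leq 2^{-k}$ and $\sum_z q_z=1$, so
\[
2^m\,\mathbb{E}_M[\mathrm{CP}(MX)]-1 \;=\; \sum_z q_z\bigl[(1+(1-2p)^{w(z)})^m-1\bigr].
\]
The $z=0$ term contributes $q_0(2^m-1)\leq 2^{m-k}$. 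The rest is handled by splitting the sum by the weight of $z$.

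The core of the argument is the choice of threshold $w^\ast \approx \ln(m/\delta)/(2p)$, chosen so that $(1-2p)^{w^\ast}\leq \delta/m$. For $z$ with $w(z)\geq w^\ast$, one has $(1+(1-2p)^w)^m-1\leq e^{m(1-2p)^w}-1\leq 2\delta$, and using $\sum_z q_z\leq 1$ this whole range contributes at most $2\delta$. For low weights $1\leq w(z)<w^\ast$, one bounds $q_z\leq 2^{-k}$ and uses the sharper estimate $((1+(1-2p)^w)/2)^m\leq e^{-mpw/2}$ valid while $pw\lesssim 1$ (from $(1+y)/2\leq e^{(y-1)/2}$ combined with $1-(1-2p)^w\geq pw$ in this range). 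This reduces the low-weight contribution to
\[
2^{m-k}\sum_{w\geq 1}\binom{n}{w}e^{-mpw/2} \;\leq\; 2^{m-k}\bigl((1+e^{-mp/2})^n-1\bigr) \;\leq\; 2^{m-k}\cdot 2ne^{-mp/2},
\]
where the last step uses that the sum is geometric once $ne^{-mp/2}\leq 1$. The quantity $mp=\log(m/\delta)\ln(Kn/m)$ was engineered exactly for this: it yields $ne^{-mp/2}\leq n\cdot(m/(Kn))^{\log(m/\delta)/2}$, which is $O(1/K)$ as soon as $K$ is taken large enough relative to the fixed constants hidden in the inequality.

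The main obstacle is this last combinatorial accounting. The crude bound $(1+(1-2p)^w)^m\leq 2^m$ is far too lossy on low weights, and one must use the refined exponential bound together with the geometric-series argument to turn the $\ln(Kn/m)$ factor in $p$ into a decaying factor $ne^{-mp/2}$. In addition, for intermediate weights $1/p\lesssim w<w^\ast$ where the bound $e^{-mpw/2}$ no longer dominates, one separately invokes the Chernoff-type estimate $((1+(1-2p)^w)/2)^m\leq e^{-\Omega(m)}$ against the binomial count $\binom{n}{w}\leq \binom{n}{w^\ast}$, which is controlled by the same choice of $w^\ast$. Combining the three regimes yields $2^m\mathbb{E}_M[\mathrm{CP}(MX)]-1\leq 2\delta+K\cdot 2^{m-k}$, and adjusting the constant gives the claimed bound.
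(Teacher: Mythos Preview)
Your overall framework (collision probability plus Cauchy--Schwarz/Jensen) matches the paper, as does the identity $\Pr_M[Mz=0]=\bigl((1+(1-2p)^{w(z)})/2\bigr)^m$. But from that point on the two arguments diverge.

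\textbf{The paper's route.} After writing $\bigl((1+(1-2p)^{|y|})/2\bigr)^m=2^{-m}\sum_{i}\binom{m}{i}(1-2p)^{i|y|}$, the paper sums in the \emph{binomial index} $i$ and bounds $a_i=\E_y[(1-2p)^{i|y|}]$. The key step for small $i$ is H\"older's inequality with exponent $B=k/(2i\log(m/\delta))$, which yields
\[
a_i\;\le\;\bigl((1+(1-2p)^{Bi})^n/|S|\bigr)^{1/B}\;\le\;(m/\delta)^{-1.8i},
\]
so the whole small-$i$ tail is controlled by $(1+(m/\delta)^{-1.8})^m\le 1+\delta$. For large $i$ one uses the crude bound $a_i\le (1+e^{-2pi})^n/|S|$. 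No weight-counting of $z$'s enters.

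\textbf{Your route.} You keep $(1+(1-2p)^w)^m$ intact and split by the Hamming weight $w$ of $z$. This is a genuinely different decomposition: it forces you to pit the growth of $\binom{n}{w}$ against the decay of $\bigl((1+(1-2p)^w)/2\bigr)^m$.

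\textbf{The gap.} Your high-weight regime ($w\ge w^\ast$, contributing $\le 2\delta$) and the very-low-weight regime ($w\lesssim 1/(2p)$, contributing $O(2^{m-k})$ via $ne^{-mp/2}=O(1)$) are fine. The problem is your intermediate range. You write ``invoke $((1+(1-2p)^w)/2)^m\le e^{-\Omega(m)}$ against $\binom{n}{w}\le\binom{n}{w^\ast}$,'' but these two maxima are attained at \emph{opposite} ends of the range and cannot be multiplied. With $\log=\ln$ one has $w^\ast=m/(2\ln(Kn/m))$, and a short calculation gives $\log_2\binom{n}{w^\ast}\approx \frac{m}{2}\cdot\frac{\ln(n/m)+\ln\ln(Kn/m)+O(1)}{\ln(Kn/m)}$. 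For e.g.\ $n/m=10$, $K=15$ this is about $0.8m$, while the uniform bound at $w=1/(2p)$ is only $((1+e^{-1})/2)^m\approx 2^{-0.55m}$; the product is $2^{+0.25m}$, not $O(1)$. The product $\binom{n}{w}\,((1+(1-2p)^w)/2)^m$ \emph{is} bounded across the range, but showing this requires tracking both factors together (dyadic decomposition in $w$, or optimizing the product) --- work that your one-line sketch does not do.

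The H\"older trick is exactly what lets the paper bypass this combinatorial accounting: by trading the exponent $i$ against $|S|^{1/B}$, one never has to control $\sum_w\binom{n}{w}(\cdots)^m$ directly. If you want to push your weight-based approach through, you will need a substantially more careful treatment of the middle range.
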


By a large deviation bound, all but an $\delta$-fraction of $H$ are $O(nmp)$-sparse.  The best error we can hope for~\cite{RT} is  $\Omega(\sqrt{2^{-k+m}})$, which is achieved by a pairwise independent hash function family. Perhaps the simplest construction of such a family is to choose each entry of $M$ independently at random with probability $p = 1/2$. When $p<1/2$, Theorem~\ref{thm:upper} shows the sparsity can be reduced dramatically at the cost of increasing the error by a little. For example if we set $\delta = 2^{-k+m}$, we obtain a $O(n \log{m} \log{(n/m)})$ sparse strong extractor family whose error is within a constant factor of optimal.


Our main negative result shows that this sparsity is necessary for a large range of values of $k$ and when $\delta$ is constant. 

\begin{theorem}
\label{thm:lower}
Suppose $n^{0.99} \leq m \leq n/6$. There exists a distribution $\calD$ over distributions $\Xbar$ on $\B^n$ of min-entropy $1.5m$ each so that for every function $h\colon \B^n \to \B^m$ of sparsity $0.001 n \log m \log(2n/m)$, the expected statistical distance between $h(\Xbar)$ and the uniform distribution over $\B^m$ is at least $1 - e^{-m^{\Omega(1)}}$.
\end{theorem}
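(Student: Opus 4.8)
The plan is to reduce the problem to a purely combinatorial statement about how much a sparse function can ``spread out'' a structured set. Concretely, I will exhibit a distribution $\calD$ over flat sources $\Xbar = U_G$, where each $G \subseteq \B^n$ has size $2^{1.5m}$ (so $\Xbar$ has min-entropy exactly $1.5m$), and argue that for every $s$-sparse $h$ with $s = 0.001\, n\log m\log(2n/m)$, with probability $1 - e^{-m^{\Omega(1)}}$ over $\Xbar \sim \calD$ the image $h(G)$ has size at most $2^{m - m^{\Omega(1)}}$. Since a distribution supported on a set of relative density $\rho$ is at statistical distance at least $1 - \rho$ from uniform, this yields $\mathrm{SD}(h(\Xbar), U_m) \ge 1 - e^{-m^{\Omega(1)}}$ whenever the image is small, and the claimed bound on the \emph{expected} statistical distance then follows (e.g.\ by Markov applied to $1 - \mathrm{SD}$).

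For the source I will randomly ``identify'' input coordinates: group the $n$ coordinates and, within each group, force a random low-dimensional pattern (for instance make a random pairing of coordinates equal), the parameters chosen so that the surviving degrees of freedom number exactly $1.5m$. Under such an identification a fixed $h$ collapses to a function $\tilde h$ of the $\le 1.5m$ surviving coordinates, still of sparsity $\le s$; the goal becomes showing $|\mathrm{range}(\tilde h)| \le 2^{m - m^{\Omega(1)}}$ for a typical identification. The crucial structural input is that a sparse map cannot behave like an expander: if $h(x) = Mx$ is linear, every row of $M$ is a codeword of $\mathcal C = \mathrm{rowspace}(M)$, so $\mathcal C$ has minimum distance at most the least row weight of $M$, which is at most $s/m = O\!\big((n/m)\log m\log(2n/m)\big)$; hence $\mathcal C$ carries at least $m/2$ codewords of weight $O(s/m)$. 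For a random identification, ``$\tilde h$ keeps full rank'' requires that no nonzero codeword of $\mathcal C$ be annihilated, and a codeword of small weight is annihilated with non-negligible probability; the abundance of such codewords then forces a large rank deficiency and a small image. For general (non-linear) $h$ I expect to isolate the analogous bottleneck among the $m/2$ lowest-locality outputs, whose joint dependence set is structured enough that a random identification ties enough of it together to confine those outputs to few joint values.

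Because a single identification only gains a constant factor in the image size unless $s$ is already tiny, I plan to apply the identification in $\Theta(\log(2n/m))$ successive layers --- each layer roughly halving the surviving coordinates, down to $\Theta(m)$ --- and show, by a union bound over the $m$ outputs, that each layer consumes a $\Theta(n\log m)$ portion of the sparsity budget: surviving all layers with a large image would require sparsity more than $\Theta(\log(2n/m)) \cdot \Theta(n\log m) = \Theta(n\log m\log(2n/m)) > s$, a contradiction. The $e^{-m^{\Omega(1)}}$-concentration throughout comes from the fact that the controlling quantities (number of surviving outputs, number of annihilated low-weight codewords, number of occupied blocks) are sums of weakly dependent indicators concentrated around their means via Chernoff/Azuma-type estimates.

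The main obstacle will be making the ``sparsity consumed per layer'' argument uniform over all sparse $h$: it must cover both functions whose sparsity is spread evenly (so that no individual output is local) and functions with a few very heavy input coordinates, all while keeping the min-entropy pinned at exactly $1.5m$ and the combinatorial bottleneck alive across all $\Theta(\log(2n/m))$ layers.
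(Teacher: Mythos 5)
There is a genuine gap, and it sits at the heart of your plan: the claim that for \emph{every} $s$-sparse $h$ a random coordinate-identification source forces $|h(G)| \leq 2^{m - m^{\Omega(1)}}$ is false. Your sources are (affine) subspaces $V \subseteq \B^n$ of dimension $1.5m$ cut out by identification constraints, so for a linear $h(x) = Mx$ the output $h(U_V)$ is exactly uniform on a subspace of dimension $\rank(M|_V)$; your approach therefore needs the rank to drop by $m^{\Omega(1)}$, i.e.\ it needs at least $2^{m^{\Omega(1)}}$ nonzero codewords of the row space of $M$ to lie in $V^\perp$. Consider the map whose $m$ rows are the indicators of $m$ disjoint blocks of size $n/m \geq 6$; its sparsity is exactly $n$, far below the budget $0.001\, n\log m\log(2n/m)$. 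A codeword here is a union of blocks, and it lies in $V^\perp$ only if every identification group meets its support in an even number of coordinates; for a random grouping this event has probability roughly $n^{-\Theta(w)}$ (certainly $2^{-\Theta(\text{number of groups touched})}$) for a weight-$w$ codeword, not the ``non-negligible'' probability your sketch assumes, and a short moment calculation shows the typical rank deficiency is $O(1)$, not $m^{\Omega(1)}$. So for this $h$ your source outputs a distribution that is exactly uniform (or within constant statistical distance of uniform) with probability close to $1$, and no choice of the multi-layer bookkeeping can recover the $1 - e^{-m^{\Omega(1)}}$ bound, because the quantity you are bounding (image size) is simply not small. More generally, having only $\Theta(m)$ low-weight codewords (the rows) can never account for the $2^{m^{\Omega(1)}}$ annihilated codewords that a large rank deficiency requires.

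The paper's proof avoids this trap by never arguing about the support of the output: it takes $\Xbar$ to be a truncated $p$-biased source with $H(p) = 2m/n$ shifted by a uniform offset $y$ (Viola's trick, which also handles non-linear $h$), observes that each output bit flips under the biased perturbation with probability at most $\tfrac12 - \tfrac12(1-2p)^{d_i}$ where $d_i$ is that output's locality, and then, after setting aside the few heavy input coordinates, uses the read-$t$ tail bound of Gavinsky et al.\ to show that $h(\Xbar + y)$ concentrates in a union of few Hamming balls of radius $m/2 - \Omega(m^{1-\beta})$, a set the uniform distribution avoids except with probability $2^{-\Omega(m^{1-2\beta})}$. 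In the disjoint-block example this works precisely because each output bit of $h(\Xbar+y)$ is noticeably biased towards $h_i(y)$ even though $h$ restricted to the support of the source is surjective. If you want to salvage your route, you would have to replace ``small image'' by some such robust notion of concentration of the output distribution; as written, the key combinatorial lemma your reduction rests on does not hold.
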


Applying Yao's minimax principle or just take the covex combination of distributions $\overline{X}$ as the bad distribution, we conclude that  the sparsity in Theorem~\ref{thm:upper} is optimal up to constant factor for this range of parameters.  The next results concerns weak extractor families.

\begin{theorem}
\label{thm:upper2}
Let $K$ be a sufficiently large constant and $n, k, c, m$ be parameters such that $1\leq k \leq n$,  $1\leq s<m$ and $c > 1$.  Let $H\colon \B^n \times \B^s \to \B^m$ be given by $H(x, r) = Mx + Br$, where $M$ is an $m\times n$ random matrix in which each entry equals $1$ independently with probability
        \[ p= \min \Bigl\{\frac{K}{m} \cdot \ln\frac{n}{\ln c},\frac12 \Bigr\}, \]
and $B$ is an $m\times s\ (m>s)$ matrix  of full rank where every set of at most $m/2K$ rows is linearly independent. Then $H$ is an extractor family for min-entropy $k$  with error $\sfrac12\sqrt{c \cdot 2^{-k-s+m}}$.
\end{theorem}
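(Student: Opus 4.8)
The plan is to run the standard collision-probability (leftover-hash) argument, keeping the random matrix $M$ on the outside. Write $\mathrm{CP}(\cdot)$ for collision probability and $\sd(\cdot,\cdot)$ for statistical distance. Since $U_m$ is independent of $M$, the distance appearing in the definition of an extractor family equals $\E_M[\sd(MX+BU_s,U_m)]$. Using $\sd(Y,U_m)\le\frac12\sqrt{2^m\,\mathrm{CP}(Y)-1}$ together with Jensen's inequality (the map $t\mapsto\frac12\sqrt{2^m t-1}$ is concave), it suffices to prove
\[
  2^m\,\E_M\bigl[\mathrm{CP}(MX+BU_s)\bigr]\ \le\ 1+c\cdot 2^{-k-s+m}.
\]

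Next I would expand the averaged collision probability. Let $X,X'$ be independent copies of the source and $U_s,U_s'$ independent uniform seeds, so $\E_M[\mathrm{CP}(MX+BU_s)]=\pr[M(X+X')=B(U_s+U_s')]$. Put $Z=X+X'$ and note that $W:=U_s+U_s'$ is uniform on $\B^s$ and independent of $Z$. Condition on $Z=z$ of Hamming weight $w$: since the rows of $M$ are independent, $Mz$ is a product distribution on $\B^m$ whose coordinates equal $1$ with probability $q_w=(1-(1-2p)^w)/2$, so $\pr_M[Mz=v]=q_w^{|v|}(1-q_w)^{m-|v|}$. As $B$ has full rank, $BW$ is uniform on the $s$-dimensional column space $C$ of $B$, and averaging over $W$ via a MacWilliams/Fourier identity (using $|C^\perp|=2^{m-s}$) gives
\[
  \E_W\bigl[\pr_M[Mz=BW]\bigr]\ =\ \frac1{2^m}\sum_{u\in C^\perp}(1-2q_w)^{|u|}\ =\ \frac1{2^m}\sum_{u\in C^\perp}(1-2p)^{|z|\,|u|},
\]
where $C^\perp=\ker B^T$. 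Hence $2^m\,\E_M[\mathrm{CP}(MX+BU_s)]=\sum_{u\in C^\perp}\E_Z\bigl[(1-2p)^{|Z|\,|u|}\bigr]$, and the $u=0$ term contributes exactly $1$.

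Two estimates then control the remaining $u\ne 0$ terms. First, the hypothesis that every set of at most $m/2K$ rows of $B$ is linearly independent is exactly the statement that $C^\perp=\ker B^T$ has minimum distance greater than $m/2K$; so every nonzero $u\in C^\perp$ has $|u|>m/2K$, there are fewer than $2^{m-s}$ of them, and $(1-2p)^{|z||u|}\le(1-2p)^{|z|\cdot m/2K}$ for each. Second, setting $\alpha:=(1-2p)^{m/2K}\in[0,1]$, I would bound $\E_Z[(1-2p)^{|Z|m/2K}]=\E_{X,X'}\bigl[\prod_{i=1}^n\alpha^{[X_i\ne X_i']}\bigr]$ (the exponent being $1$ if $X_i\ne X_i'$ and $0$ otherwise) by writing each factor as $\frac{1+\alpha}{2}+\frac{1-\alpha}{2}(-1)^{X_i+X_i'}$ and multiplying out; this yields $\sum_{S\subseteq[n]}\bigl(\frac{1+\alpha}{2}\bigr)^{n-|S|}\bigl(\frac{1-\alpha}{2}\bigr)^{|S|}\widehat X(S)^2$, where $\widehat X(S)=\E_X[(-1)^{\sum_{i\in S}X_i}]$. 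The weight $\bigl(\frac{1+\alpha}{2}\bigr)^{n-|S|}\bigl(\frac{1-\alpha}{2}\bigr)^{|S|}$ is nonnegative and maximized at $S=\emptyset$, and $\sum_S\widehat X(S)^2=2^n\,\mathrm{CP}(X)\le 2^{n-k}$ by Parseval and the min-entropy hypothesis, so $\E_Z[(1-2p)^{|Z|m/2K}]\le 2^{-k}(1+\alpha)^n$.

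Combining, $2^m\,\E_M[\mathrm{CP}(MX+BU_s)]\le 1+2^{m-s}\cdot 2^{-k}(1+\alpha)^n=1+2^{-k-s+m}(1+\alpha)^n$, so it only remains to verify $(1+\alpha)^n\le c$. If $p=\frac12$ then $\alpha=0$ and this is immediate; otherwise $p=\frac Km\ln\frac n{\ln c}$ yields $\alpha\le e^{-pm/K}=(\ln c)/n$, hence $(1+\alpha)^n\le e^{\alpha n}\le c$. Plugging back, the averaged statistical distance is at most $\frac12\sqrt{c\cdot 2^{-k-s+m}}$, as claimed. I expect the only genuinely delicate point to be the middle bookkeeping — recognizing that the averaged collision probability decomposes as a sum over $C^\perp$, which is precisely where the minimum-distance property of $B$ enters — together with calibrating $p$ so that the bias factor $(1-2p)^{m/2K}$ drops below $(\ln c)/n$; this last step reflects the tension between taking $p$ small for sparsity and taking it large enough to absorb the $2^{n-k}$ loss incurred by passing through collision probability.
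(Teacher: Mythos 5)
Your proposal is correct and follows essentially the same route as the paper: expand the averaged collision probability as a sum over the dual code of $B$ (the paper writes this as the subsets $T$ of rows with $\sum_{i\in T}B_i=0$), use the linear-independence hypothesis to kill every nonzero term of weight at most $m/2K$, and bound the roughly $2^{m-s}$ surviving terms by $2^{-k}(1+\alpha)^n\leq c\cdot 2^{-k}$ with the same calibration of $p$. The only cosmetic difference is that you bound $\E_Z[(1-2p)^{|Z|m/2K}]$ for a general min-entropy-$k$ source via Parseval, whereas the paper first passes to flat sources and simply enlarges the sum over $S_0$ to all of $\B^n$; both yield the identical estimate.
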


The construction of a matrix $B$ with the desired properties and $O(m)$ sparsity is a well studied problem in the theory of low density parity check codes~\cite{Ga62,SS96}. Capalbo et al.~\cite{CRVW} give an explicit construction with $s = \alpha m$ for some constant $\alpha < 1$ and every $m$, which is optimal up to the choice of the constant $\alpha$. Instantiating Theorem~\ref{thm:upper2} with this matrix, and setting $c = 2$, we obtain a family of $O(n \log n)$  sparse extractors with error $O(\sqrt{2^{-k-s+m}})$, which is optimal up to constant factor. (If $m = k + s - O(1)$, the output contains almost all the entropy from the source plus all the entropy invested by the seed and the error is an arbitrarily small constant.) By using a larger value of $c$, we can reduce the sparsity at the cost of increasing the error.

We observe that using the randomized encoding of Applebaum et al.~\cite{AIK}, these extractors can be made to have constant locality at the cost of increasing the seed length $r$ to $O(n \log n)$ bits.

For certain parameters the weak extractor family from Theorem~\ref{thm:upper2} bypasses the limitation on strong extractor families from Theorem~\ref{thm:lower}. For example, for constant statistical distance and $m = n^{0.99}$ Theorem~\ref{thm:lower} implies that for a strong extractor to produce even a constant fraction of the entropy a sparsity of $\Omega(n (\log n)^2)$ is necessary, while Theorem~\ref{thm:upper2} says that an $O(n \log n)$-sparse weak extractor family can extract all but $O(1)$ bits of entropy.  

\subsection{An application}

\paragraph{Pseudorandom generators from one-way functions}  The construction of pseudorandom generators from one-way functions of H\aa\/stad et al.~\cite{HILL} does not in general preserve locality.  Haitner, Reingold, and Vadhan~\cite{HRV} gave a construction that is more efficient and can be implemented in $\nc^1$. Recently Vadhan and Zheng~\cite{VZ} gave an even simpler variant of this construction. In combination with the ``compiler'' of Applebaum, Ishai, and Kushilevitz ~\cite{AIK04}, one obtains a generic locality-preserving transformation of one-way functions into pseudorandom generators. 

Applying the transformation of Applebaum et al.  may have an adverse effect on seed length, as it may grow quadratically. However, the construction of Vadhan and Zheng is extremely simple; it is obtained by applying extractor to a sequence of ``blocks'', each of which inherits the locality of the one-way function $f$. Instantiating the extractor by the construction from Theorem~\ref{thm:upper2}, we obtain a transformation of {\em nonuniform} one-way functions into nonuniform pseudorandom generators that preserves output locality logarithmic in the size of the adversary with the same seed length as the one obtained by Vadhan and Zheng. Using an additional idea of Applebaum et al., the transformation can be made to preserve constant output locality at the expense of increasing the seed length. We describe this application in Section~\ref{sec:prg} (see Proposition~\ref{prop:owfprg}).

\subsection{Related work}


\paragraph{Sparse extractors for restricted sources} Motivated by certain applications, Zhou and Bruck ~\cite{ZB11} show that low density random matrices can efficiently extract random bits from some restricted noisy sources, such as bit fixing sources and Markov sources. Our Theorem~\ref{thm:upper} shows that essentially the same construction extracts from arbitrary sources of given min-entropy.

\paragraph{Extractors in $\nc^0$}  Applebaum, Ishai and Kushilevitz~\cite{AIK06} give a weak extractor in $\nc^0$ (thus sparsity $O(n)$) works for min-entropy $k = (1 - O(1))n$, but suffers $\Omega(n)$ entropy loss. Our extractor family from Theorem~\ref{thm:upper2} matches these parameters. The construction from~\cite{AIK06} does not appear to extend to distributions of smaller min-entropy or allow for smaller entropy loss, while ours does. However, they provide a single extractor that works for all distributions, while we only give an extractor family.



 


\paragraph{Locally computable extractors} A {\em locally computable extractor}~\cite{Lu04, Vad03} is an extractor in which after the seed is fixed, the output as a whole depends on a small number of input bits. Such extractors are used to implement private-key encryption in the bounded storage model~\cite{Maurer}. We observe that the notions of locally computable extractors and sparse extractor families are fundamentally different. This is best illustrated in the regime in which we extract all the entropy, which is of main interest in this work. A lower bound of Vadhan~\cite{Vad03} shows that when the output length $m$ is linear in the min-entropy $k$, then even after the seed is fixed the output of the extractor as a whole must depend on at least a linear fraction of the input. Thus $o(n)$-locally computable extractors are not possible when $m = \Omega(k)$. Although this is inevitable, our results show that it is possible to make a small number of input-output dependencies.

We observe that the locally computable extractors of Lu~\cite{Lu04} and De and Trevisan~\cite{DT09} are also sparse, but they extract only a fixed root of the min-entropy $k$. This is sufficient for bounded storage cryptography, but not for the application we describe in Section~\ref{sec:applications}.

\subsection{Our proofs}
Bogdanov and Rosen~\cite{BR11} proved a quantitatively weaker version of Theorem~\ref{thm:upper} that achieves sparsity $O(n (\log n)^3)$ instead of the optimal $O(n\log(m)\log(n/m))$. (They did not attempt to determine the dependence on $m$ and they can achieve sparsity $O(n\log(m)\log(n/m)^2)$) In their proof, $H$ is viewed as a collection of boolean functions $(h_1, \dots, h_m), h_i\colon \B^n \to \B$. They show that for most choices of $h_1$, conditioning on $h_1(x)$ reduces the min-entropy $k$ of $x$ by at most $1 + 1/\poly(k)$ bits (unless $k$ is very small), and so this bit extraction can be applied iteratively for $m$ steps.

One drawback of this argument is that as $i$ gets larger and the min-entropy of $x$ conditioned on $h_1(x),\dots,h_{i-1}(x)$ becomes smaller, the density of the functions $h_i$ must keep increasing (as required by our lower bound). To achieve our optimal (up to constant) sparsity, we must analyze the effect of all the functions $h_1, \dots, h_m$ simultaneously. 

To do this, we upper bound the probability that two samples $x, x'$ collide under $h$, that is the probability that $h(x + x') = 0$. For a fixed pair $(x, x')$, each entry $h_i(x + x')$ of $h(x + x')$ is biased towards zero. We can think of $h_i(x + x')$ as a random variable that takes value zero with some probability $p(x, x')$, and is unbiased otherwise. Intuitively, our analysis shows that the unbiased components of this distribution dominate in collisions. Several technical complications arise in the formal argument. One useful tool that allows us to analyze the case when most of the components of $h(x + x')$ are unbiased is H\"older's inequality. 

To give an idea of our proof of Theorem~\ref{thm:lower}, let's make the simplifying assumption that $h$ is $\ell$-local, where $\ell = \gamma (n/m) \log m \log(n/m)$. We give a heuristic argument why we expect the output of $h$ to be far from uniform when $h$ is linear. Let $X$ be the $p$-biased distribution over $\B^n$ (each bit takes value $1$ independently with probability $p$) and $p$ is chosen so that $H(p) = m/n$, where $H(p)$ is the binary entropy of $p$. Then the distribution $X$ has Shannon entropy $m$. However, every output bit of $h(X)$ is $(1 - 2p)^\ell$-biased, and we chose the parameters so that $(1 - 2p)^\ell = m^{-\Omega(\gamma)}$. By choosing $\gamma$ small enough, we can ensure that every output bit of $h$ has, say, $m^{-1/2}$ bits of {\em entropy deficiency}, so by the sub-additivity of Shannon entropy $h(X)$ has $m^{1/2}$ fewer bits of entropy than a uniformly random variable over $\B^m$. So $h(X)$ does not ``look'' random in terms of Shannon entropy.

To turn this heuristic argument into a proof we need to handle several issues, the most interesting of which is replacing entropy deficiency by statistical distance from the uniform distribution. One advantage of measuring entropy deficiency is that entropy is subadditive, which allows us to ignore the dependencies between the various outputs of $h(X)$ in the above argument. In contrast, to obtain a good lower bound on statistical distance we must take into account these dependencies. Here we apply tail bound for read $t$ family ~\cite{gavinsky2012tail}.  To extend the analysis from linear functions to general ones we apply an elegant idea of Viola~\cite{Vio05} of shifting $X$ by a random offset.

We establish Theorem~\ref{thm:upper2} by a relatively straightforward probabilistic calculation.

\subsection{Open problems}

In terms of seed length our sparse extractors are quite poor. For example the size of the family $H$ in Theorem~\ref{thm:upper} is exponential in $n$. By a standard probabilistic argument it can be shown that a random sample of $H$ of size $O(n / \eps^2)$ is as effective as the whole family while incurring an additional penalty of only $\eps$ in statistical distance. Consequently there is no existential obstacle to sparse extractor families with short seed. It remains to see if such families can be found efficiently.

For weak sparse extractors, our work leaves open two possible improvements. First, we do not know if the sparsity of our weak extractors is the best possible. Second, we do not know what is the minimal size of a sparse weak extractor family. It could be that even a family of size 1, i.e. a single sparse weak extractor, is sufficient. Such an extractor could be used to obtain a uniform construction of local pseudorandom generators from local one-way functions. Could there be a single sparse weak extractor of sparsity linear in $n$ that extracts $k - k^{0.99}$ bits of min-entropy for every source over $\B^n$ of min-entropy $k$?

\section{Proof of Theorem~\ref{thm:upper}}
\label{sec:upper}

To prove Theorem~\ref{thm:upper} it is sufficient to show that for every set $S$ of size $2^{k}$, the statistical distance between $(H, H(X))$ and $(H, U)$ is at most $\sfrac12\sqrt{\delta + O(2^{-k+m})}$, where $X$ is chosen at random from $S$ and $U$ is uniformly random.
In fact we will show for every $x_0 \in S$,
\[ \pr_{H, X}[H(X)=H(x_0)] \leq \frac{1+ \delta + O(2^{-k+m})}{2^m} \]
from where
\[ \pr_{H, X, X'}[H(X) = H(X')] \leq \max_{x_0 \in S} \pr_{H, X}[H(X) = H(x_0)] \leq \frac{1+ \delta + O(2^{-k+m})}{2^m} \]
where $X$ and $X'$ are independent samples from $S$. This is sufficient to establish Theorem~\ref{thm:upper} using the relation between collision probability and statistical distance from Claim~\ref{claim:colldist} in Appendix~\ref{app:colldist}.

\begin{proof}
When $p = 1/2$ the analysis is standard, so we will assume that $p =\frac{1}{m} \cdot \log(m/\delta)\ln(15 n/m) < 1/2$.
Since entries of $M$ are chosen independently from each other, for any $y \in \{0,1\}^n$, we have
\[
\pr_H[H(y) = 0] = \pr_a[\ip{a, y} = 0]^m = \Bigl(\frac{1+(1-2p)^{|y|}}{2}\Bigr)^m =\frac{1}{2^m}\sum_{i=0}^m\binom{m}{i}(1-2p)^{i|y|}
\]
Here $a \sim \B^m$ is chosen from the $p$-biased distribution. Let $S_0$ be the set $\{x_0 + x\colon x\in S\}$. Then
\begin{align*}
\pr_{H, X}[H(X)=H(x_0)] &=\pr_{H, y \sim S_0}[H(y) = 0] \\
  & = \E_{y\sim S_0}\Bigl[\frac{1}{2^m}\sum_{i=0}^m\binom{m}{i}(1-2p)^{i|y|}\Bigr] = \frac{1}{2^m}\sum_{i=0}^m\binom{m}{i}\E_{y\sim S_0}[(1-2p)^{i|y|}] 
\end{align*}
Let $a_i= \E_{y\sim S_0}[(1-2p)^{i|y|}]$. We now upper bound the sum $\sum_{i=0}^m\binom{m}{i}a_i$ by $1+\delta+O(2^{-k+m})$. We will consider two cases: When $i$ is small -- specifically, $i \leq k/(2\log{(m/\delta)})$, we show that $a_i$ decreases at a rate faster than $(m/\delta)^{-i}$, so the sum is dominated by the term $i = 0$. When $i$ is large, we want to bound both $a_i$ and $\binom{m}{i}$ by its largest possible value. To achieve this, we have to further split the large $i$'s into ``mildly large'' and ``very large'' ones and apply the argument to each summation separately. The resulting contribution is $O(2^{-k+m})$.     Notice that, in the case $m\leq k/(2\log{(m/\delta)})$, we need not  consider the contribution of large $i$'s thus we can improve $p$ to be $\frac{1}{k} \cdot \log(m/\delta)\ln(15 n/k)$ by using the same analysis for small $i$'s.

\paragraph{The small $i$'s.} We show that if $i \leq k/(2\log{(m/\delta)})$, then $a_i \leq (m/\delta)^{-1.8i}$ and therefore
\[ \sum_{i=0}^{k/(2\log{(m/\delta)})} \binom{m}{i}a_i\leq (1+(m/\delta)^{-1.8})^m \leq e^{{\delta}^{1.8}/m^{0.8}}\leq 1+\delta. \]
To bound $a_i$ we apply H\"older's inequality, which says that for every $B \geq 1$:
\[ a_i = \E_{y\sim S_0}[(1-2p)^{i|y|}] = \frac{1}{\abs{S}} \sum_{y \in \B^n} 1_{y \in S_0} \cdot (1 - 2p)^{i\abs{y}}
\leq \frac{1}{\abs{S}} \abs{S}^{1-1/B} \sum_{y \in \B^n} \bigl((1 - 2p)^{Bi\abs{y}}\bigr)^{1/B} \]
The last expression can be simplified to give
\[
a_i \leq {\Bigl(\frac{(1+(1-2p)^{Bi})^n}{|S|}\Bigr)}^{1/B} \leq  \Bigl(\frac{(1+e^{-2piB})^n}{|S|}\Bigr)^{1/B}.
\]
We choose $B = k/(2i\log{(m/\delta)})$, which is at least one because $i \leq k/(2 \log(m/\delta))$. By our choice of $p$, it follows that $2piB \geq \ln(15n/m)$ and so
\[ a_i \leq \Bigl(\frac{(1+ m/15n)^n}{|S|}\Bigr)^{1/B} \leq \Bigl(\frac{e^{m/15}}{2^{k}}\Bigr)^{1/B} 
   \leq \Bigl(\frac{e^{k/15}}{2^{k}}\Bigr)^{1/B} \leq (2^{-0.9})^{k/B} = (m/\delta)^{-1.8i}. \]

\paragraph{The large $i$'s.} We have that
\begin{equation}
\label{eqn:basicbound}
a_i = \E_{y\sim S_0}[(1-2p)^{i|y|}] \leq \frac{1}{\abs{S}} \sum_{y \in \B^n} (1-2p)^{i|y|} = \frac{(1 + (1 - 2p)^i)^n}{\abs{S}}
\leq \frac{(1 + e^{-2pi})^n}{\abs{S}}.
\end{equation}
When $i \geq m/4$, the last expression is at most $(1 + e^{-pm/2})^n/\abs{S}$. By our choice of $p$, $pm/2 \geq \sfrac12 \log m \ln(15n/m)$. Optimizing for $\log m$, it can be calculated that this expression is at least $\ln n$ when $n$ is sufficiently large. It then follows that
\[ a_i \leq \frac{(1 + 1/n)^n}{\abs{S}} \leq e 2^{-k} \qquad \text{and so} \qquad \sum_{i=m/4}^m \binom{m}{i} a_i \leq 2^m \cdot e2^{-k} = e2^{-k+m}. \]
Finally, we handle the $i$'s in the range $k/(2\log{(m/\delta)}) < i < m/4$. Using (\ref{eqn:basicbound}) and the lower bound on $i$, we have that
\[ a_i \leq \frac{(1 + m/15n)^n}{\abs{S}} \leq \frac{e^{m/15}}{\abs{S}} \leq 2^{0.1m - k} \]
and so
\[ \sum_{k/(2\log{(m/\delta)}) < i < m/4} \binom{m}{i}a_i \leq 2^{0.1m - k} \sum_{i=0}^{m/4} \binom{m}{i} \leq 2^{0.1m - k} \cdot 2^{H(1/4)m + O(1)} \leq 2^{ - k+m+ O(1)} , \]
where $H$ is the binary entropy function, and $H(1/4) \leq 0.9$.
\end{proof}

\section{Proof of Theorem~\ref{thm:lower}}
\label{sec:lower}
Let distribution $\Xbar$ be a truncated variant of the $p$-biased distribution $X$ where $p$ is chosen so that $H(p) = 2m/n$ and $p \leq 1/2$. The distribution $\calD$ on distributions is defined as follows. Choose $y$ uniformly from $\B^n$ and output $y+\Xbar$.  To prove Theorem~\ref{thm:lower},  we will show for most choices of $y$,  there exists a  statistical test $T_y$ that distinguishes $h(y+\Xbar)$ from the uniform distribution $U$ and then argue the expected statistical distance over choice of $y$ between $h(y+\overline{X})$ and $U$ is large. We will define $T_y$ shortly for $y\in\B^n$ and first argue that for most choice of $y$, $T_y$ distinguishes $h(y+X)$ from $U$. Then we will show how to define $\Xbar$ of min-entropy $1.5m$ in a way that $X$ and $\Xbar$ are statistically close. Finally, we conclude that the expected statistical distance over choice of $y$ between $h(y+\overline{X})$ and $U$ is at least $1-e^{-m^{\Omega(1)}}$.

The following bounds on $p$ are obtained by plugging in $H(p) = 2m/n$ in Lemma~\ref{lemma:entropy} in Appendix~\ref{app:entropy}:
\begin{equation}
\label{eqn:pbounds}
\frac{m}{3n \log_2(n/m)} \leq p \leq \frac{2m}{n \log_2(n/2m)}.
\end{equation}

Now suppose $h$ has sparsity $(m/2p) \beta \log m$, where $\beta$ is a sufficiently small constant, say $\beta = 0.08$. Notice this $\beta$ also satisfies $n \leq m^{1 + 2\beta}$ (since by assumption $m \geq n^{0.99}$).  Partition the inputs of $h'$ into two sets $H$ and $L$, where $H$ contains those inputs that participate in at least $m^{2 - 6\beta}/pn$ outputs of $h'$, and $L$ contains the rest. By Markov's inequality (using the assumption $n \leq m^{1 + 2\beta}$), $H$ has size at most $m^{8\beta} \beta \log k$. For $x \in \B^n$, let $x_0$ and $x_1$ denote its projections onto $H$ and $L$, respectively. For every $y\in\B^n$, we define the statistical test
		\[T_y=\{z\in\B^{m}: \text{$\Delta(h(x_0,y_1),z) \leq 1/2 - m^{-\beta}/4$ for some $x_0$}, \}\]
where $\Delta(a, b)$ is relative Hamming distance between the strings $a$ and $b$, i.e. the fraction of positions in which they differ.

\begin{claim}
\label{claim:ellone}
For sufficiently large $k$, $\Pr_X[h(X + y)\in T_y]\geq 1-e^{-m^{3\beta}/2}$ for at a least $1 - 1-e^{-m^{3\beta}/2}$ values of $y \in \B^n$.
\end{claim}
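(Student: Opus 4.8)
The plan is to fix a ``typical'' offset $y$ and argue that, with high probability over $X$ (the truncated $p$-biased source), the string $h(X+y)$ lands in $T_y$. Recall $T_y$ asks that $h(x_0, y_1)$ — the output obtained by keeping the $L$-coordinates of $X+y$ frozen at $y_1$ and letting the $H$-coordinates range — be close in relative Hamming distance (within $1/2 - m^{-\beta}/4$) to $h(X+y)$ for some choice of $x_0$. Since $X+y$ restricted to $H$ is just some fixed value $x_0$ once we sample $X$, the real content is: the ``noise'' coming from the $L$-part of $X$ flips only a $(1/2 - m^{-\beta}/4)$-fraction of the output coordinates, relative to the baseline $h(x_0, y_1)$. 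So the first step is to write, for each output coordinate $j$, the indicator that coordinate $j$ of $h(x_0, y_1 + X_1)$ differs from coordinate $j$ of $h(x_0, y_1)$; because $h$ is linear this is just the parity of $X_1$ over the low-degree inputs feeding output $j$, and it is $1$ with probability $\tfrac12(1 - (1-2p)^{d_j})$ where $d_j$ is the number of $L$-inputs (counted with the right multiplicity, i.e.\ the support of row $j$ of $M$ restricted to $L$) on which output $j$ depends. Since every input in $L$ participates in at most $m^{2-6\beta}/(pn)$ outputs, the total number of $(i,j)$ incidences with $i \in L$ is still at most $(m/2p)\beta\log m$, so the $d_j$ are small on average.

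The second step is a bias computation: I want to show that for most output coordinates $j$, $(1-2p)^{d_j}$ is at least $m^{-\beta/2}$ or so, equivalently $d_j \lesssim \beta\log m /(2p)$ on average, which follows from the sparsity bound $\sum_j d_j \le (m/2p)\beta\log m$ by Markov — only a $1/m^{\beta}$-fraction (say) of coordinates can have $d_j$ much larger than the average. For the ``good'' coordinates, the flip probability is at most $\tfrac12(1 - m^{-\beta/2})$, i.e.\ bounded away from $1/2$ by $\Omega(m^{-\beta/2})$; for the ``bad'' coordinates I just use the trivial bound $1/2$ on the flip probability and absorb their $o(m)$ count into the slack. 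The expected relative Hamming distance $\Delta(h(x_0,y_1+X_1), h(x_0,y_1))$ is then at most $\tfrac12 - \Omega(m^{-\beta/2}) + o(1) \le \tfrac12 - m^{-\beta}/2$, comfortably below the threshold $\tfrac12 - m^{-\beta}/4$ with room to spare.

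The third step — and the one I expect to be the main obstacle — is concentration: I need $\Delta$ to be close to its expectation with failure probability at most $e^{-m^{3\beta}/2}$. The coordinate-flip indicators are not independent (the $X_i$ for $i \in L$ are shared among many outputs), but each output depends on at most $m^{2-6\beta}/(pn)$ inputs and each input on at most that many outputs, so this is a bounded-dependence / ``read-$t$'' situation with $t$ on the order of $m^{2-6\beta}/(pn) \le m^{1-6\beta}$ (using $pn \ge m/(3\log(n/m))$ and $n \le m^{1+2\beta}$). This is exactly the regime where I would invoke the tail bound for read-$t$ families of~\cite{gavinsky2012tail}: a deviation of $m^{-\beta}/4 \cdot m = m^{1-\beta}/4$ in the count of flipped coordinates, divided by $t \approx m^{1-6\beta}$, gives an effective deviation parameter like $m^{5\beta}$, and the tail bound yields $\exp(-\Omega(m^{1-\beta}\cdot \text{something}/t))$; a careful bookkeeping of the constants (choosing $\beta = 0.08$ so all the exponents line up) should deliver the stated $e^{-m^{3\beta}/2}$ bound. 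A minor additional point is quantifying ``for most $y$'': the bias argument above actually holds for \emph{every} $y$ once $h$ is fixed (it only used the sparsity structure of $M$), so in fact the ``for most $y$'' phrasing in the claim is only needed to absorb the truncation step relating $X$ to $\Xbar$ later — for $X$ itself the statement can be made to hold for all $y$, and I would note this.
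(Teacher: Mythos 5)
There is a genuine gap: your argument assumes $h$ is linear (``because $h$ is linear this is just the parity of $X_1$ over the \ldots inputs feeding output $j$''), but Claim~\ref{claim:ellone} must hold for \emph{every} function $h$ of the stated sparsity --- Theorem~\ref{thm:lower} is a lower bound against arbitrary sparse maps, not just $GF(2)$-linear ones. For a general Boolean output depending on $d$ inputs, the flip probability under a $p$-biased perturbation at a \emph{fixed} base point $y_1$ can be as large as $1$ (e.g.\ outputs that are anti-correlated with their value at that particular $y_1$), so your per-coordinate bound $\tfrac12\bigl(1-(1-2p)^{d_j}\bigr)$ is simply false pointwise in $y$. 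The bound \eqnref{eqn:noise} only holds \emph{on average over a uniformly random shift} $Y$; this is exactly Viola's random-offset trick, and it is the reason the claim is phrased ``for most $y$'' and the reason the paper's proof takes the probability over both $x_1$ and $y_1$ (for fixed $x_0,y_0$) before applying Markov over $y$. Consequently your closing remark --- that the bias argument holds for every $y$ and that ``for most $y$'' is only needed to absorb the truncation from $X$ to $\Xbar$ --- is incorrect for the general (non-linear) case the claim is actually about; the truncation step in the paper is handled separately and for all relevant $y$ at once.

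A secondary, quantitative problem: your good/bad coordinate split via Markov does not deliver the needed expectation bound against the \emph{fixed} threshold $\tfrac12-m^{-\beta}/4$ in the definition of $T_y$. Markov only says that at most a $1/C$ fraction of coordinates have $d_j\ge C\bar d$; taking, say, $C=2$ gives good coordinates with bias roughly $(1-2p)^{2\bar d}\approx m^{-2\beta}$, so the expected relative distance is only $\le \tfrac12-\Omega(m^{-2\beta})$, which for large $m$ is \emph{above} the threshold $\tfrac12-m^{-\beta}/4$ (and your claimed $\tfrac12-\Omega(m^{-\beta/2})$ is not justified). The paper instead uses convexity of $d\mapsto(1-2p)^d$ (Jensen) to get $\E[Z]\le \tfrac{m}{2}\bigl(1-(1-2p)^{\beta\log m/2p}\bigr)\le m\bigl(\tfrac12-m^{-\beta}/2\bigr)$, leaving the $m^{-\beta}/4$ slack that the read-$t$ tail bound then exploits. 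Your concentration step is the right tool (the same bound of Gavinsky et al.\ with $t=m^{2-6\beta}/pn$), though note the exponent comes from $2\eps^2 m/t$ with $\eps=m^{-\beta}/4$, not from $\eps m/t$; done correctly it gives $e^{-\Omega(m^{3\beta})}$ as in the paper, but only once the expectation bound and the non-linear case are repaired as above.
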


In the proof we will need the following fact about Boolean functions $f\colon \B^d \to \B$
\begin{equation}
\label{eqn:noise}
\Pr_{X, Y}[f(X+Y)\neq f(Y)] \leq \tfrac12 - \tfrac12(1-2p)^d
\end{equation}
where $Y$ is uniformly distributed in $\B^n$, and $X$ is chosen independently from the $p$-biased distribution on $\B^n$. This fact follows easily by Fourier analysis~\cite{OD02} and was also used by Viola~\cite{Vio05} in a context related to ours. 

We will also make use of the following inequality of Gavinsky et al.~\cite{gavinsky2012tail}. A collection of indicator random variables $Z_1, \dots, Z_m$ is called a {\em read $t$ family} of functions if there exist independent random variables $X_1, \dots, X_n$ such that each $X_i$

Then we will apply tail bound for {\em read t family} of functions ~\cite{gavinsky2012tail} to show for most choice of $(x,y)$ outcome  concentrate on expectation.   Indicator random variables $Z_1, \dots, Z_m$ is a {\em read $t$ family}  if they can written as a function of independent random variables $X_1, \dots, X_n$ where each $X_i$
affects at most $t$ of the $Z_i$'s. Then for every $\eps > 0$,
\begin{equation}
\label{eqn:tail}
\Pr[Z \geq \E[Z] + \eps m] \leq e^{-2\eps^2m/t}.
\end{equation}
where $Z = Z_1 + \dots + Z_m$.

\begin{proof}
We will show that for every choice of $x_0, y_0$, with probability $1 - e^{-\Omega(m^{1-\beta})}$ over the choice of $x_1, y_1$, $h(x+y) = h(x_0 + y_0, x_1 + y_1)$ is in $T_y$.  Fix $x_0, y_0$ and consider the function $h^{x_0+y_0}(x_1) = h(x_0+y_0, x_1)$. Let $Z = Z_1 + \dots + Z_{m}$, where
\[ Z_i = \begin{cases}
1, &\text{if $h^{x_0+y_0}_i(x_1 + y_1)\neq h^{x_0+y_0}_i(y_1)$}, \\
0, &\text{otherwise}.
\end{cases} \]
Suppose $h^{x_0+y_0}_i$ depends on $d_i$ inputs for $1\leq i\leq m$, by (\ref{eqn:noise}) we have
              \[\E[Z_i] \leq 1/2(1-(1-2p)^{d_i}).\] 
By linearity of expectation and $\sum_{i=1}^m d_i\leq (m/2p)\beta\log{m}$,  we get
             \[\E[Z] \leq m/2(1-(1 -2p)^{\beta\log{m}/2p})\leq m(1/2-m^{-\beta}/2).\] 
Now we apply tail bound (\ref{eqn:tail}) to $Z_1, \dots, Z_m$ with $t = m^{2 - 6\beta}/pn$ and $\eps=m^{-\beta}/4$  to obtain
		\[\Pr[Z\geq m(1/2-m^{-\beta}/4)]\leq e^{-2\eps^2m/(m^{2 - 6\beta}/pn)}\leq e^{- m^{3\beta}}\]
where we used the estimate (\ref{eqn:pbounds}) to lower bound $pn$. In other words,
\[\pr_{x_1, y_1}[\Delta(h_{x_0+y_0}(y_1),h_{x_0+y_0} (x_1 + y_1))  \leq 1/2 - m^{ - \beta}/4] \geq 1 -e^{- m^{3\beta}}.\]
 It follows that
\begin{align*}
\pr_{x, y}[h(x+y)\in T_{y}]  
   & \geq \E_{x_0,y_0}\bigl[\pr_{x_1, y_1}[ \Delta(h_{x_0+y_0}( y_1),h_{x_0+y_0}( x_1+y_1))\leq 1/2 - m^{ - \beta}/4]\bigr]   \\ 
 &   \geq 1-e^{-m^{3\beta}}. 
\end{align*}
Applying Markov's inequality, we conclude that for at least $1-e^{-m^{3\beta}/2}$ choices of $y$, 
\[ \E_X[h(X+y)\in T_y]\geq 1-e^{-m^{3\beta}/2}. \hfill\qedhere \]
\end{proof}

\begin{claim}
\label{claim:uniform}
For any fixed $y\in\B^n$, with probability $1-2^{-\Omega(m^{1-2\beta})}$ over the choice of a uniform $U \sim \B^{m}$, $U$ is not in $T_y$.
\end{claim}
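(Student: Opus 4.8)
The plan is to union-bound over the possible values of the ``heavy'' coordinates $x_0$. Fix $y\in\B^n$; this fixes the restriction $y_1$ of $y$ to $L$. For each of the at most $2^{|H|}$ possible values $x_0\in\B^{|H|}$, the string $w_{x_0}:=h(x_0,y_1)\in\B^m$ is completely determined, and by definition $U\in T_y$ if and only if $\Delta(w_{x_0},U)\le 1/2-m^{-\beta}/4$ for at least one such $x_0$. So it suffices to bound, for a single fixed string, the probability that a uniform $U$ is close to it, and then sum over $x_0$.

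First I would handle one fixed $w\in\B^m$. The indicators $1[w_j\neq U_j]$, $j=1,\dots,m$, are i.i.d.\ unbiased bits, so $m\cdot\Delta(w,U)$ is distributed as $\mathrm{Bin}(m,1/2)$ with mean $m/2$. Hoeffding's inequality applied with deviation $t=m^{1-\beta}/4$ gives
\[
\pr_U\bigl[\Delta(w,U)\le \tfrac12-\tfrac{m^{-\beta}}{4}\bigr]\;=\;\pr\bigl[\mathrm{Bin}(m,\tfrac12)\le \tfrac m2 - t\bigr]\;\le\; e^{-2t^2/m}\;=\;e^{-m^{1-2\beta}/8}.
\]

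Next I would take a union bound over all values of $x_0$, obtaining
\[
\pr_U[U\in T_y]\;\le\;2^{|H|}\cdot e^{-m^{1-2\beta}/8}.
\]
Recall from the setup that $|H|\le m^{8\beta}\beta\log k$, and that $\beta=0.08$ was chosen so that $8\beta<1-2\beta$ (indeed $10\beta<1$). Hence $|H|=o(m^{1-2\beta})$, so in base $2$ the exponent above is $|H|-m^{1-2\beta}/(8\ln 2)=-(\Theta(1)-o(1))\,m^{1-2\beta}$, which is $-\Omega(m^{1-2\beta})$ for $m$ large. Thus $\pr_U[U\in T_y]\le 2^{-\Omega(m^{1-2\beta})}$, as required.

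The only step that needs genuine attention is the last bookkeeping one: the union bound over $x_0$ is affordable only because $|H|$ is polynomially smaller than $m^{1-2\beta}$, and this is exactly what the earlier choice $\beta<1/10$ (together with the $m^{2-6\beta}/pn$ threshold defining the set $H$ and the assumption $n\le m^{1+2\beta}$) was arranged to guarantee. Everything else is a standard Hoeffding estimate, so I expect no real obstacle beyond tracking these exponents.
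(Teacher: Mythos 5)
Your proof is correct and is essentially the paper's own argument: a Chernoff/Hoeffding bound showing a fixed string $h(x_0,y_1)$ is within relative distance $1/2-m^{-\beta}/4$ of uniform $U$ with probability $2^{-\Omega(m^{1-2\beta})}$, followed by a union bound over the at most $2^{|H|}\leq 2^{m^{8\beta}\beta\log m}$ possible strings, which is affordable precisely because $\beta<1/10$. The only difference is cosmetic bookkeeping of the exponents, which you track correctly.
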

\begin{proof}
Since $H$ has size at most $m^{8\beta}\beta \log k$, the range of $h'(x_0, y_1)$ has at most $2^{m^{8\beta}\beta \log m}$ elements. For every  such element $h(x_0, y_1)$, the probability that $U$ is within distance $m/2 - m^{1-\beta}/4$ to $h(x_0, y_1)$ can be computed by Chernoff bounds to be at most $2^{-\Omega(m^{1 - 2\beta})}$. Taking a union bound over all such $h(x_0, y_1)$, we obtain
\[ \pr[U \in T_y] \leq 2^{m^{8\beta}\beta \log m} 2^{-\Omega(m^{1 - 2\beta})} = 2^{-\Omega(m^{1 - 2\beta})} \]
as long as $\beta < 1/10$ and $m$ is sufficiently large.
\end{proof}

From these two claims, it follows that for a $1-e^{-m^{3\beta}/2}$ choices of $y$,
\begin{equation}
\label{eqn:statX}
\pr_X[h(X+y) \in T_y] - \pr_U[U \in T_y] \geq 1 - 1-e^{-m^{3\beta}/2} - 2^{-\Omega(m^{1 - 2\beta})}. 
\end{equation}
To finish the proof, we show how to replace $X$ with another variable $\Xbar$ of min-entropy at least $1.5m$ that is statistically close to it. We define $\Xbar$ as follows: First, choose $X$ from the $p$-biased distribution. If the Hamming weight of $X$ is at least $0.9pn$, set $\Xbar = X$. Otherwise, let $\Xbar$ be uniformly random in $\B^n$. We prove the following claim in Appendix~\ref{app:minent}:

\begin{claim}
\label{claim:minent}
$\Xbar$ has min-entropy at least $1.5m$. 
\end{claim}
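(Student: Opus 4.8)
The plan is to verify directly that $\pr[\Xbar = x] \leq 2^{-1.5m}$ for every $x \in \B^n$, which is exactly the assertion that $\Xbar$ has min-entropy at least $1.5m$. Unwinding the definition of $\Xbar$, the event $\Xbar = x$ occurs in one of two disjoint ways: either $\abs{X} \geq 0.9pn$ and $X = x$ (which can happen only if $\abs{x} \geq 0.9pn$), or $\abs{X} < 0.9pn$ and the independent uniform draw equals $x$. Hence
\[ \pr[\Xbar = x] \leq p^{\abs{x}}(1-p)^{n-\abs{x}}\cdot 1_{\abs{x}\geq 0.9pn} + \pr[\abs{X} < 0.9pn]\cdot 2^{-n}. \]
The second summand is at most $2^{-n} \leq 2^{-6m}$ since $m \leq n/6$, so it is negligible; all the work is in bounding the first summand when $\abs{x} \geq 0.9pn$.

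For that term I would use that the map $w \mapsto p^w(1-p)^{n-w} = (1-p)^n\bigl(p/(1-p)\bigr)^w$ is non-increasing in $w$ because $p \leq 1/2$, so over integer weights $w = \abs{x} \geq 0.9pn$ it is maximized by taking $w$ as small as possible, giving $p^{\abs{x}}(1-p)^{n-\abs{x}} \leq p^{0.9pn}(1-p)^{(1-0.9p)n}$. Taking base-$2$ logarithms and substituting the defining identity $nH(p) = -pn\log_2 p - (1-p)n\log_2(1-p) = 2m$, the cross terms cancel and the right-hand side becomes
\[ 0.9\bigl(pn\log_2 p + (1-p)n\log_2(1-p)\bigr) + 0.1\, n\log_2(1-p) = -1.8m + 0.1\, n\log_2(1-p) \leq -1.8m, \]
using $\log_2(1-p) \leq 0$. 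Thus $p^{\abs{x}}(1-p)^{n-\abs{x}} \leq 2^{-1.8m}$ whenever $\abs{x}\geq 0.9pn$.

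Combining the two bounds, $\pr[\Xbar = x] \leq 2^{-1.8m} + 2^{-6m} \leq 2^{1-1.8m} \leq 2^{-1.5m}$ once $m$ is sufficiently large (which holds in the regime of Theorem~\ref{thm:lower} since $m \geq n^{0.99}$), completing the argument. The only step that needs care is the algebraic cancellation in the middle paragraph: one must confirm that after plugging $H(p) = 2m/n$ into $\log_2\bigl(p^{0.9pn}(1-p)^{(1-0.9p)n}\bigr)$ the leftover term $0.1\, n\log_2(1-p)$ carries the favorable sign. Conceptually the choice of truncation threshold is exactly what is required: discarding the weight-$<0.9pn$ strings removes only an exponentially small mass (by a Chernoff bound, which is also what makes $X$ and $\Xbar$ statistically close) yet eliminates the strings that would otherwise spoil the min-entropy, since e.g. $0^n$ has probability $(1-p)^n$, which can be far larger than $2^{-2m}$.
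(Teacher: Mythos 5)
Your proposal is correct and follows essentially the same route as the paper: bound $\pr[\Xbar = x]$ by the $p$-biased probability at the worst-case weight $0.9pn$ plus the $2^{-n}$ uniform contribution, and use $nH(p)=2m$ to get the $2^{-1.8m}$ bound (your exact cancellation leaving $0.1\,n\log_2(1-p)\le 0$ is just a slightly cleaner version of the paper's use of $H(p)\ge p\log_2(1/p)$). No gaps; the only shared caveat is that the final step $2^{-1.8m}+2^{-n}\le 2^{-1.5m}$ needs $m$ bounded away from a small constant, which holds in the theorem's regime.
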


Clearly the same conclusion holds for the distribution $\Xbar + y$. The statistical distance between $X$ and $\overline{X}$ is upper bounded by the probability that $X$ has Hamming weight less than $0.9pn$. By Chernoff bounds, the probability of this is at most $\exp(-\Omega(pn))$, which using the lower bound (\ref{eqn:pbounds}) is at least $\exp(-\Omega(m/\log(n/m))) = \exp(-m^{\Omega(1)})$ (since $m \geq n^{-0.99}$). Applying the triangle inequality, for all $y$ satisfying (\ref{eqn:statX}) we have 
\[ \pr_{\Xbar}[h(\Xbar+y) \in T_y] - \pr_U[U \in T_y] \geq 1 -  e^{-m^{\Omega(1)}}. \]
We conclude that the expected statistical distance between $h(\overline{X}+y)$ and $U_m$ for a random choice of $y$ is at least $(1-  e^{-m^{\Omega(1)}})(1- e^{-m^{\Omega(1)}})=1- e^{-m^{\Omega(1)}}$.

\section{Proof of Theorem~\ref{thm:upper2}}
\label{sec:upper2}

As in the proof of Theorem~\ref{thm:upper}, it is sufficient to show for every set $S$ of size $2^k$ and every $x_0$ in $S$ and $r_0$ in $\B^s$,
\[ \Pr_{M, X, R}[MX + BR = Mx_0 + Br_0] \leq \frac{1+(1/\delta)\cdot 2^{-k-s+m}}{2^m}\]
where the probability is taken over the random matrix $M$, $X$ chosen uniformly from $S$ and $R$ chosen uniformly from $\B^s$.

Assume that $p \leq 1/2$.  Let $S_0$ be the set $\{x + x_0: x \in S\}$. Then
\[
\Pr_{M, X, R}[MX + BR = Mx_0 + Br_0]   = \Pr_{M, X, R}[M(X + x_0)= B(R+r_0)] =\Pr_{M, Y, R}[MY = BR]
\]
where $Y$ is a random element from $S_0$. Let $M_i,B_i$ denote the $i$th row of $M$ and $B$.  Then 
\begin{multline*}
\Pr[MY = BR]  = \E_{M,Y,R} [\prod_{i=1}^{m}\frac{1+(-1)^{M_iY + B_iR}} {2}] \\
  = \frac{1}{2^m} \sum_{T\subseteq [m]} \E_{M, Y, R}\bigl[(-1)^{\sum_{i\in T}M_iY + B_iR}\bigr] 
 = \frac{1}{2^m} \sum_{T\subseteq [m]} \E_{M, Y}[(-1)^{\sum_{i\in T}M_iY}]\E_R[(-1)^{\sum_{i\in T}B_iR}].
\end{multline*}
Since any $t=m/2K$ rows of $B$ are linearly independent, for every nonempty $T$ of size at most $t$, $\sum_{i\in T} B_i \neq 0$ and so $\E[(-1)^{\sum_{i\in T}B_iR}]=0$.  On the other hand for every $T$ of size at least $t$ we have
\begin{multline*}
\E_{M, Y}[(-1)^{\sum_{i\in T}M_iY}] 
= \frac{1}{2^k}\sum_{y\in S_0}\E_M[(-1)^{\sum_{i\in T}M_iy}] 
= \frac{1}{2^k} \sum_{y\in S_0}(1-2p)^{|y|\cdot\/|T|} \\
\leq \frac{1}{2^k}\sum_{y\in \B^n}(1-2p)^{t|y|} 
\leq \frac{1}{2^k}(1+(1-2p)^t)^n
\leq \frac{e^{ne^{-2pt}}}{2^k}.
\end{multline*}
Since $B$ has full rank, the condition $\sum_{i \in T} B_i = 0$ is satisfied for at most $2^{m-s}$ sets $T$. Hence,
\begin{align*}
\sum_{T\subseteq [m]} \E_{M,Y}[(-1)^{\sum_{i\in T}M_iY}] \E_R[(-1)^{\sum_{i\in T}B_iR}] 
&= 1+\sum_{T:|T|>t} \E_{M,Y}[(-1)^{\sum_{i\in T}M_iy}] \E_R[(-1)^{\sum_{i\in T}B_is}] \\
& \leq 1+\sum_{T:|T|>t}  \frac{e^{ne^{-2pt}}}{2^k} \abs{\E_R[(-1)^{\sum_{i\in T}B_iR}]}  \\
& \leq 1+ 2^{m-s} \cdot \frac{e^{ne^{-2pt}}}{2^k}  \\
&= 1+ \frac{e^{ne^{-2pt}}}{2^{s+k-m}}.
\end{align*}
Plugging in $t= \frac{m}{2K}$ and $p =\frac{K}{m} \cdot \ln\frac{n}{\ln{c}}$ we get the desired bound.

\section{Local pseudorandom generators from local one-way functions}

\label{sec:applications}
\label{sec:prg}
A sequence of correlated random variables $X_1, \dots, X_m$ taking values in $\B^n$ has {\em $(s, \eps)$ conditional pseudo-min-entropy} $r$ if for every $1 \leq i \leq m$, there exists a random variable $Y_i$ jointly distributed with $X_1, \dots, X_{i-1}$ such that the min-entropy of $Y$ conditioned on any choice of $X_1, \dots, X_{i-1}$ is at least $r$ and for every circuit $D$ of size $s$,
\[ \bigabs{\pr[D(X_i) \mid X_1, \dots, X_{i-1}] - \pr[D(Y_i) \mid X_1, \dots, X_{i-1}]} \leq \eps. \]

Vadhan and Zheng~\cite{VZ} give the following construction of conditional pseudo-min-entropy sequences from a one-way function $f\colon \B^n \to \B^n$. Let $z_i, 1 \leq i \leq t$ be the random strings
\[ z_i = \/_{o_i}\lfloor f(x_{i1})  \circ \/x_{i1} \circ \dots \circ  f(x_{ik})  \circ x_{ik}\rfloor_{n-o_i} \]
where $1 \leq o_i \leq n$ is an offset, $x_{i1}, \dots, x_{ik}$ are random strings, and $_{f}\lfloor\/y\rfloor_\ell$ denotes truncating the first $f$ and last $\ell$ bits of $y$ respectively. Let $X_j = z_{1j}z_{2j}\dots\/z_{tj}$, where $z_{ij}$ denotes the $j$th bit of $z_i$. Vadhan and Zheng prove the following theorem (we state it in the nonuniform setting).

\begin{theorem}[Vadhan and Zheng] 
Suppose $f\colon \B^n \to \B^n$ is computable by a circuit of size $\poly(n)$ and is hard to invert on a $1/s$ fraction of inputs by circuits of size $s$. 
There exists offsets $o_1, \dots, o_t$ such that for every $\eps$, $X_1, \dots, X_m$ has $(s^{\Omega(1)}/\poly(n\eps), \eps)$ conditional pseudo-min-entropy at least $t(1/2 + \Omega((\log s)/n))$ where $k = O(n/\log{s})$, $t = O((n/\log{s})^2\log^2{n}\log{(1/\eps)})$ and $m=2(k-1)n$.
\end{theorem}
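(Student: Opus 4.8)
The plan is to follow the argument of Vadhan and Zheng~\cite{VZ}, which builds on the next-block pseudoentropy framework of Haitner, Reingold and Vadhan~\cite{HRV}; in the nonuniform setting stated here the uniform min-max theorem of~\cite{VZ} can be replaced by a plain hybrid argument, which is cleaner. The starting point is the fact that if $f$ is hard to invert on a $1/s$ fraction of inputs by size-$s$ circuits, then $f(X)\circ X$, parsed into the block $f(X)$ followed by the $n$ individual bits of $X$, has \emph{next-block pseudoentropy} at least $n+\Delta$ for $\Delta=\Omega(\log s)$: the real Shannon entropy of this sequence is exactly $n$, but the sum of its next-block conditional pseudo-Shannon-entropies exceeds $n$ by $\Delta$ against adversaries of size $s^{\Omega(1)}$. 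I would import this as a black box from~\cite{HRV}.

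The next ingredient is entropy equalization. Concatenating $k$ independent copies of $f(x)\circ x$ inside each $z_i$ and then deleting a random prefix of length $o_i$ together with a suffix of length $n-o_i$ rotates the block pattern: for a uniformly random offset, coordinate $j$ of $z_i$ falls on a near-uniform one of the roughly $2n$ block types, so the expectation over $o_i$ of the next-block pseudoentropy at coordinate $j$ equals the average surplus per bit, which is $1/2+\Omega((\log s)/n)$ and is the same for every $j$; choosing $k$ of order $n/\log s$ ensures that the truncation destroys only a negligible fraction of the total surplus. Forming $X_j=z_{1j}\cdots z_{tj}$ from $t$ independent copies with offsets $o_1,\dots,o_t$, a Chernoff bound shows that for a random tuple of offsets the average over the $t$ copies of the per-bit pseudoentropy at coordinate $j$ is within $o(\Delta/n)$ of $1/2+\Omega((\log s)/n)$, and a union bound over $j\in[m]$ fixes one tuple that works for all coordinates simultaneously; the margin needed for these concentration and union-bound steps contributes the $\log^2 n$ factor in $t$.

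Finally I would convert per-bit pseudo-Shannon-entropy into pseudo-\emph{min}-entropy of the $t$-bit blocks by the standard argument that turns Shannon pseudoentropy into min-pseudoentropy using many independent copies and a flattening (typical-set) step. For each coordinate $i$ and copy index $\ell\in[t]$, take an ideal bit $\widetilde Y_{i,\ell}$ jointly distributed with the past whose conditional Shannon entropy is at least $1/2+\Omega(\log s/n)$ and which is indistinguishable from $X_{i,\ell}$ given the past, and set $Y_i=\widetilde Y_{i,1}\cdots\widetilde Y_{i,t}$. Since the $t$ copies use fresh randomness the $\widetilde Y_{i,\ell}$ are conditionally independent across $\ell$, so a Chernoff/typical-set bound shows that, conditioned on any fixing of $X_1,\dots,X_{i-1}$, the variable $Y_i$ has min-entropy at least $t(1/2+\Omega(\log s/n))-O(\sqrt t)$ outside an event of probability $\eps$; discarding that event (and folding its cost into $\eps$) yields the claimed min-entropy $r=t(1/2+\Omega(\log s/n))$ once $t\gg(n/\log s)^2$ makes the $\sqrt t$ fluctuation negligible against the $t\cdot(\log s/n)$ surplus, which forces the $(n/\log s)^2$ factor, while the $\log(1/\eps)$ factor accommodates the error probability. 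Indistinguishability of $X_i$ from $Y_i$ by circuits of size $s^{\Omega(1)}/\poly(n\eps)$ then follows from a hybrid argument across the roughly $mt$ bit positions.

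The main obstacle is the bookkeeping in this last step: one must construct the ideal sequence $Y_1,\dots,Y_m$ coherently, so that each $Y_i$ is jointly consistent with $X_1,\dots,X_{i-1}$, carries a \emph{worst-case}-over-fixings min-entropy lower bound, and is produced by a hybrid that loses only polynomially and only an $s^{\Omega(1)}$ factor in adversary size. The delicate point is that next-block pseudoentropy is an average-case, Shannon-scale guarantee whereas conditional pseudo-min-entropy as defined above demands a uniform min-entropy bound; the flattening over many independent copies is precisely the bridge, and calibrating its parameters so that $\sqrt t$-scale fluctuations are dominated by the $t(\log s/n)$-scale entropy surplus is the technical heart of the argument of~\cite{HRV,VZ}.
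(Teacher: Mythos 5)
The paper does not actually prove this statement---it is imported verbatim from Vadhan and Zheng \cite{VZ} (building on \cite{HRV})---and your outline reproduces exactly that source's route: next-block pseudoentropy of $f(X)\circ X$, entropy equalization by random offsets with the truncation controlled by $k=O(n/\log s)$, and flattening of $t$ independent copies (with $t \gtrsim (n/\log s)^2\log(1/\eps)$ so that $\sqrt{t}$-scale fluctuations are dominated by the $t\cdot\Omega((\log s)/n)$ surplus) to convert conditional Shannon pseudoentropy into conditional pseudo-min-entropy, so your approach matches the cited proof. One caveat on attribution that matters for the parameters: the bound $n+\Delta$ with $\Delta=\Omega(\log s)$ for $f(X)\circ X$ is precisely Vadhan and Zheng's contribution (via their KL-hardness characterization), whereas \cite{HRV} only gives $\Delta=\Omega(\log n)$; since the stated $k=O(n/\log s)$ and rate $1/2+\Omega((\log s)/n)$ hinge on the stronger bound, your black box must be taken from \cite{VZ}, not \cite{HRV}.
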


The following claim was proved in the uniform setting by Haitner, Reingold, and Vadhan. We need a nonuniform version of it, whose proof is analogous. We include it at the end of this section for completeness.

\begin{claim}
\label{claim:owfhash}
Suppose $X_1, \dots, X_m$ (where $X_i$ takes values in $\B^t$) has $(T,\eps_1)$ conditional pseudo-min-entropy $\alpha$. Let $H$ be an extractor family for min-entropy $\alpha$ with error $\eps_2$ so that every function in $H$ is computable in size $T_0$. Then with probability at least $1/2$ over the choice of $H$ the distribution $(H(X_1, R_1), \dots, H(X_m, R_m))$ is $(T - mT_0, m\eps_1+2m^2\eps_2)$-pseudorandom where $R_1,\dots, R_m \sim \B^r$.
\end{claim}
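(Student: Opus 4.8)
The plan is a standard hybrid argument, interleaving the pseudo-min-entropy guarantee of the blocks $X_1,\dots,X_m$ with the extraction guarantee of the family $H$. First I would fix a good choice of $H$: by the extractor-family property, for each fixed $i$ and each fixing of $X_1,\dots,X_{i-1}$, the distribution $(H,H(Y_i,R_i))$ is $\eps_2$-close to $(H,U_t)$, where $Y_i$ is the min-entropy-$\alpha$ random variable promised by the conditional pseudo-min-entropy assumption. Averaging over the conditioning and over $i$, and applying Markov's inequality, with probability at least $1/2$ over $H$ we have that for all $i$ simultaneously the expected (over the prefix) statistical distance between $H(Y_i,R_i)$ and $U_t$ is at most $2m\eps_2$. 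Fix such an $H$ for the rest of the argument; every function in it is computable in size $T_0$.

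Next I would set up the hybrids $W^{(0)},\dots,W^{(m)}$, where $W^{(j)}=(H(X_1,R_1),\dots,H(X_j,R_j),U^{(j+1)},\dots,U^{(m)})$ with the last $m-j$ coordinates independent uniform on $\B^t$; so $W^{(m)}$ is the real distribution and $W^{(0)}$ is uniform. It suffices to bound the distinguishing advantage of any size-$(T-mT_0)$ circuit $D$ between consecutive hybrids $W^{(i-1)}$ and $W^{(i)}$ by $\eps_1+2m\eps_2$, since summing over $i$ gives the claimed $m\eps_1+2m^2\eps_2$. The step from $W^{(i-1)}$ to $W^{(i)}$ replaces a uniform block in position $i$ by $H(X_i,R_i)$. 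I would interpose one more sub-hybrid that uses $H(Y_i,R_i)$ in position $i$ instead of $H(X_i,R_i)$. Moving from $H(X_i,R_i)$ to $H(Y_i,R_i)$ costs at most $\eps_1$: a distinguisher here yields a size-$(T-mT_0+mT_0)=T$ circuit that, given a sample purportedly from $X_i$ or $Y_i$ conditioned on the prefix $X_1,\dots,X_{i-1}$, hashes it with $H$, samples the remaining coordinates itself (uniform for positions $>i$, and for positions $<i$ it can sample $X_1,\dots,X_{i-1}$ from the prefix it was handed), and runs $D$ — contradicting the $(T,\eps_1)$ conditional pseudo-min-entropy bound. Moving from $H(Y_i,R_i)$ to a uniform block costs at most $2m\eps_2$ by the statistical-distance bound secured in the first paragraph, since $H$ was chosen good.

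The main obstacle — and the only place any care is needed — is the bookkeeping of the reduction in the $\eps_1$ step: one must check that the conditioning on $X_1,\dots,X_{i-1}$ is handled correctly (the nonuniform distinguisher for conditional pseudo-min-entropy is allowed to depend on, and take as input, the prefix), and that the circuit size lost to evaluating the $i$-th hash and sampling the other $m-1$ blocks is at most $mT_0$, so the reduction stays within size $T$. The blowup is $mT_0$ rather than $T_0$ because, although only one block is freshly hashed, the distinguisher $D$ expects all $m$ hashed coordinates as input and we must produce them; the $<i$ coordinates come for free from the prefix, but it is cleanest to just budget $mT_0$. Everything else is the routine triangle inequality over the $2m$ hybrid transitions. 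I would close by noting the nonuniform version of the conditional pseudo-min-entropy definition is exactly the one stated earlier in this section, so no new machinery is required beyond what Vadhan and Zheng use in the uniform setting.
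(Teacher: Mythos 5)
Your proposal matches the paper's proof essentially step for step: the same hybrids $X^{(i)}$ with a sub-hybrid $Y^{(i)}$ replacing $H(X_i,R_i)$ by $H(Y_i,R_i)$, the same Markov-plus-union-bound argument to fix a good $H$ (statistical distance $2m\eps_2$ per step with probability $1-1/2m$ each, hence $1/2$ overall), the same $T-mT_0$ size accounting for the reduction to conditional pseudo-min-entropy, and the same triangle-inequality summation yielding $m\eps_1+2m^2\eps_2$. No gaps; your extra remarks on handling the conditioning prefix only make explicit what the paper leaves terse.
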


Instantiating Claim~\ref{claim:owfhash} with the function family from Theorem~\ref{thm:upper2} where we set the output length of  function to be  $t(1/2+\Omega((\log{s})/n))+r$ and let $d$ be the entropy loss. We obtain the following consequence for the function 
\[ G(x_{11}, \dots, x_{tk}, r_1,\dots, r_{m}) = (H(X_1, r_1 ), \dots, H(X_{m}, r_{m})). \]

\begin{proposition}
\label{prop:owfprg}
Suppose $f\colon \B^n \to \B^n$ is an $\ell$-local function computable by a circuit of size $\poly(n)$ and is hard to invert on a $1/s$ fraction of inputs by circuits of size $s$.  With probability at least $1/2$ over the choice of $H$, $G\colon \B^{nkt+ mr} \to \B^{nkt(1 + \Omega((\log s)/n))+mr}$ is an $O(\ell \cdot (nkt\ln{({t/\ln{c}})}+ mr))$-sparse, 
$(s^{\Omega(1)}/\poly(n\eps), \poly(n)(\eps + \sqrt{c}2^{-d/2}))$ pseudorandom generator where $k = O(n/\log{s})$, $t = O((n/\log{s})^2\log^2{n}\log{(1/\eps)})$ and $m=2(k-1)n$.
\end{proposition}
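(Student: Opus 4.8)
The plan is to assemble the three ingredients that are already available: the conditional pseudo-min-entropy sequence of Vadhan and Zheng~\cite{VZ}, the reduction of Claim~\ref{claim:owfhash} turning a sparse extractor family applied block-wise to such a sequence into a pseudorandom generator, and the sparse weak extractor family of Theorem~\ref{thm:upper2}. All the real content sits in these three results; the proof of the proposition is a matter of composing them and propagating locality and parameters.

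\textbf{Instantiating and counting the security.} First I would invoke the Vadhan--Zheng theorem to obtain that $X_1,\dots,X_m$ (with $X_i\in\B^t$, $k=O(n/\log s)$, $t=O((n/\log s)^2\log^2 n\log(1/\eps))$, $m=2(k-1)n$) has $(T_1,\eps)$ conditional pseudo-min-entropy at least $\alpha:=t(1/2+\Omega((\log s)/n))$, where $T_1=s^{\Omega(1)}/\poly(n\eps)$; the $\ell$-locality of $f$ plays no role here and enters only in the sparsity count. Next I instantiate Theorem~\ref{thm:upper2} with input length $t$, min-entropy parameter $\alpha$, seed length $r$, parameter $c>1$, and output length $\alpha+r-d$, so that the entropy loss is the slack parameter $d$; this produces an extractor family $H\colon\B^t\times\B^r\to\B^{\alpha+r-d}$ with error $\eps_2=\tfrac12\sqrt{c\cdot 2^{-d}}=\tfrac{\sqrt c}{2}\,2^{-d/2}$, taking for $B$ the explicit low-density matrix of Capalbo et al.~\cite{CRVW} as discussed after Theorem~\ref{thm:upper2}. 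Because $H(x,r)=Mx+Br$ with $M$ and $B$ of near-linear weight, every function in $H$ is computable by a circuit of size $T_0=\poly(n)$. Feeding this into Claim~\ref{claim:owfhash}, with probability at least $1/2$ over $H$ the map $G(x_{11},\dots,x_{tk},r_1,\dots,r_m)=(H(X_1,r_1),\dots,H(X_m,r_m))$ is a $(T_1-mT_0,\ m\eps+2m^2\eps_2)$-pseudorandom generator; using $m=\poly(n)$ and $T_0=\poly(n)$ this is the claimed $(s^{\Omega(1)}/\poly(n\eps),\ \poly(n)(\eps+\sqrt c\,2^{-d/2}))$. The input length of $G$ is $nkt+mr$ and its output length is $m(\alpha+r-d)$, which for small $d$ and large $k$ equals $nkt(1+\Omega((\log s)/n))+mr$ up to the lower-order term $md$ and the difference between $k-1$ and $k$; in particular the stretch is positive.

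\textbf{Counting the sparsity.} The key observation is that each of the $t$ coordinates of a block $X_j=z_{1j}z_{2j}\cdots z_{tj}$ is, regardless of the truncation offsets $o_i$, a single bit of some concatenation $f(x_{i1})\circ x_{i1}\circ\cdots\circ f(x_{ik})\circ x_{ik}$, hence either a literal input bit of $G$ (one dependency) or a bit of some $f(x_{ij})$ and therefore a function of at most $\ell$ inputs since $f$ is $\ell$-local; thus as a function of $G$'s $x$-inputs each block $X_j$ is $\ell$-local. By a Chernoff bound over the choice of $M$ (as in the remark after Theorem~\ref{thm:upper}), with all but negligible probability $M$ has $O(t\ln(t/\ln c))$ ones, while $B$ is a fixed $O(\alpha+r)$-sparse matrix. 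Hence for each block the map $H(X_j,r_j)=MX_j+Br_j$ contributes at most $\ell\cdot O(t\ln(t/\ln c))$ dependencies on $x$-inputs --- each edge of $M$ is replaced by at most $\ell$ dependencies routed through the $\ell$-local $X_j$ --- and $O(\alpha+r)$ dependencies on the weight-one seed bits $r_j$. Summing over the $m$ blocks and using $mt=\Theta(nkt)$ and $m\alpha=\Theta(nkt)$, the total sparsity is $O(\ell\, mt\ln(t/\ln c))+O(m\alpha)+O(mr)=O(\ell\, nkt\ln(t/\ln c)+mr)$, which is $O(\ell\,(nkt\ln(t/\ln c)+mr))$. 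Intersecting the all-but-negligible sparsity event with the probability-$\tfrac12$ event of Claim~\ref{claim:owfhash} keeps the overall success probability at least $1/2$.

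\textbf{Main difficulty.} There is no genuine mathematical obstacle --- every piece is already in place --- but the bookkeeping requires care at two points: verifying that the $\ell$-locality of $f$ survives both the Vadhan--Zheng block layout (where the interleaving of the $z_i$'s and the truncations could a priori entangle the inputs of a block) and the sparse linear extractor $H$; and checking the parameter identifications $mt=\Theta(nkt)$, $m\alpha=\Theta(nkt)$, $T_0=\poly(n)$, $m=\poly(n)$ that collapse the sparsity and security bounds to the stated closed forms and guarantee positive stretch.
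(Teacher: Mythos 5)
Your proposal is correct and follows essentially the same route as the paper: the proposition is obtained exactly by composing the Vadhan--Zheng conditional pseudo-min-entropy theorem with Claim~\ref{claim:owfhash}, instantiated by the family of Theorem~\ref{thm:upper2} with output length $t(1/2+\Omega((\log s)/n))+r$ and entropy loss $d$, and then tracking sparsity via the $\ell$-locality of each bit of $X_j$ (which comes from an input $x_i$ or an output bit of $f(x_i)$) together with the $O(t\ln(t/\ln c))$ weight of $M$ and the $O(m)$-sparse matrix $B$ of Capalbo et al. Your bookkeeping of the parameters matches the paper's intended argument.
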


We can improve the locality of $G$ at the expense of increasing its input and output length by the factor of $O(\ln(t/\ln{c}))$ via the following transformation of Applebaum, Ishai, and Kushilevitz. For every output of $G$, which is obtained by applying a sparse linear transformation to some $X_j$ and therefore has the form
\[ X_{jk_1} + \dots + X_{jk_t} \]
introduce auxiliary new inputs $r_{j3}, r_{j4}, \dots, r_{j(t-1)}$ for $G$ and replace its corresponding output by the tuple
\[ (X_{jk_1} + X_{jk_2} + r_{j3}, r_{j3} + X_{jk3} + r_{j4}, \dots, r_{j_{t-1}} + X_{j(t-1)} + X_{jt}). \]
Call this new function $G'$. Applebaum et al. show that if $G$ is $(s^{\Omega(1)}, s^{-\Omega(1)})$-pseudorandom, so is $G'$. Since every bit of $X_j$ comes either from some input $x_i$ or from some output $f(x_i)$, it follows that if $f$ has locality $\ell$, then $G'$ has locality $3\ell$.

\begin{proof}[Proof of Claim~\ref{claim:owfhash}]
Let $Y_i$ be the conditional min-entropy model for $X_i$. We consider the hybrid distributions
\begin{align*}
X^{(i)} &= (H(X_1, R_1), \dots, H(X_{i-1}, R_{i-1}), H(X_i, R_{i}), U_{i+1}, \dots, U_m) \quad\text{and} \\
Y^{(i)} &= (H(X_1, R_1), \dots, H(X_{i-1}, R_{i-1}), H(Y_i, R_{i}), U_{i+1}, \dots, U_m)
\end{align*}
where $U_1, \dots, U_m$ are uniformly random and independent. By the definition of conditional pseudo-min-entropy, for every $i$ the distributions $X^{(i)}$ and $Y^{(i)}$ are $(T - mT_0,  \eps_1)$-indistinguishable. Because $\calH$ is an extractor family, the distributions $(H, H(Y_i, R_i) \mid X_1, \dots, X_{i-1})$ and $(H, U_i)$ are within statistical distance $\eps_2$ for any choice of $X_1, \dots, X_{i-1}$. It follows that $(H, Y^{(i)})$ and $(H, X^{(i-1)})$ are within statistical distance at most $\eps_2$, so by Markov's inequality $Y^{(i)}$ and $X^{(i-1)}$ are within statistical distance $2m\eps_2$ with probability at least $1 - 1/2m$ over the choice of $H$. By a union bound, with probability at least $1/2$ over the choice of $H$, $Y^{(i)}$ and $X^{(i-1)}$ are $2m\eps_2$-statistically close for all $i$. For such a choice of $H$, by the triangle inequality $X^{(m)}$ is $(T - mT_0, m\eps_1+2m^2\eps_2)$ indistinguishable from $X^{(0)}$. Since $X^{(m)} = (H(X_1，R_1), \dots, H(X_m, R_m))$ and $X^{(0)}$ is the uniform distribution, we obtain the desired conclusion.
\end{proof}

\medskip
\paragraph{Acknowledgments} We would like to thank Rafail Ostrovsky for asking about the existence of a local extraction alternative to hash functions, and Elchanan Mossel, Oded Regev, Alon Rosen, Yuval Ishai, and Sid Jaggi for helpful discussions.

\bibliographystyle{alpha}
\bibliography{database}

\appendix

\section{Statistical distance versus collision probability}
\label{app:colldist}

Here we give a standard bound of the statistical distance of the distributions $(H, H(X))$ and $(H, U)$ in terms of the collision probability of $H$. The proof is very well known when the distribution over $H$ is uniform. In our application, however, it is not, so we include the proof for completeness.

\begin{claim}
\label{claim:colldist}
Let $H$ be a distribution on functions from $\B^n$ to $\B^m$, $X$ be a distribution over $\B^n$ and $U$ be the uniform distribution over $\B^m$. The statistical distance between $(H, H(X))$ and $(H, U)$ is at most
\[ \frac12 \sqrt{2^k\pr_{H, X, X'}[H(X) = H(X')] - 1}. \]
where $X$ and $X'$ are independent samples from the same distribution.
\end{claim}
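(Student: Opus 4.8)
The plan is to relate statistical distance to the $\ell_2$ (Euclidean) distance between the joint distributions, and then express that $\ell_2$ distance in terms of collision probabilities. First I would write the statistical distance as $\frac12 \sum_{h} \pr_H[h] \sum_{z \in \B^m} \abs{\pr_X[h(X) = z] - 2^{-m}}$. For each fixed $h$, apply Cauchy--Schwarz to the inner sum over the $2^m$ values of $z$: $\sum_z \abs{\pr_X[h(X)=z] - 2^{-m}} \leq \sqrt{2^m} \cdot \bigl(\sum_z (\pr_X[h(X)=z] - 2^{-m})^2\bigr)^{1/2}$. Then I would pull the expectation over $H$ inside using Cauchy--Schwarz (or Jensen, since $\sqrt{\cdot}$ is concave) a second time, to get that the statistical distance is at most $\frac12 \sqrt{2^m} \cdot \bigl(\E_H \sum_z (\pr_X[h(X)=z] - 2^{-m})^2\bigr)^{1/2}$.

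Next I would expand the squared term: $\sum_z (\pr_X[h(X)=z] - 2^{-m})^2 = \sum_z \pr_X[h(X)=z]^2 - 2\cdot 2^{-m}\sum_z \pr_X[h(X)=z] + 2^m \cdot 2^{-2m} = \pr_{X,X'}[h(X) = h(X')] - 2^{-m}$, where $X, X'$ are independent copies. Taking the expectation over $H$ gives exactly $\pr_{H,X,X'}[H(X) = H(X')] - 2^{-m}$. Substituting back, the statistical distance is at most $\frac12 \sqrt{2^m \bigl(\pr_{H,X,X'}[H(X)=H(X')] - 2^{-m}\bigr)} = \frac12\sqrt{2^m \pr_{H,X,X'}[H(X)=H(X')] - 1}$. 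Since the claim as stated writes $2^k$ in place of $2^m$, I note that $m \le s + k = k$ in the strong case (or more generally the extractor outputs at most $k$ bits in the relevant regime), so $2^m \le 2^k$ and the bound with $2^k$ follows; alternatively one reads the statement with $m$ throughout, as is used in the body of the proof of Theorem~\ref{thm:upper}.

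The only mild subtlety — and the step most worth care — is the double application of Cauchy--Schwarz: one must be careful that the first application is done pointwise in $h$ before averaging, and the second (bringing the square root outside the expectation over $H$) uses concavity of the square root, i.e. $\E_H[\sqrt{A_h}] \le \sqrt{\E_H[A_h]}$ with $A_h = \sum_z(\pr_X[h(X)=z]-2^{-m})^2$. Everything else is a routine expansion of a sum of squares into a collision probability, so there is no real obstacle; the proof is short.
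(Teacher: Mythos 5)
Your proof is correct and follows essentially the same route as the paper's: bound the $\ell_1$ distance by the $\ell_2$ distance via Cauchy--Schwarz and expand the squared deviation into $\pr_{X,X'}[h(X)=h(X')]-2^{-m}$ (the paper merely folds the averaging over $H$ into a single Cauchy--Schwarz with weights $\sqrt{\pr_H[H=h]}$, which is equivalent to your pointwise-in-$h$ step followed by Jensen). Your reading of the $2^k$ in the statement as standing for $2^m$ matches how the paper derives and uses the claim in the proof of Theorem~\ref{thm:upper}.
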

\begin{proof}
We upper bound the $\ell_1$ distance, which is twice the statistical distance:
\begin{align}
\ell_1((H, h(X)), (H, U))
&= \sum\nolimits_{h, b} \bigabs{\pr_{H, X}[H = h \wedge H(X) = b] - \pr_H[H = h]2^{-m}]} \notag \\
&= \sum\nolimits_{h, b} \sqrt{\pr_H[H = h]} \cdot \sqrt{\pr_H[H = h]}\Bigl|\pr_X[h(X) = b] - 2^{-m}]\Bigr| \notag \\
&\leq \sqrt{\sum\nolimits_{h, b} \pr_H[H = h]} \cdot \sqrt{\sum\nolimits_{h, b} \pr_h[H = h]\bigl(\pr_{H, X}[H(X) = b] - 2^{-m}]\bigr)^2} \notag \\
&= \sqrt{2^m} \cdot \sqrt{\pr_{H, X, X'}[H(X) = H(X')] - 2^{-m}} \notag \\
&=  \sqrt{2^m\pr_{H, X, X'}[H(X) = H(X')] - 1}. \tag*{\qedhere}
\end{align}
\end{proof}

\section{Bounds on the inverse of entropy}
\label{app:entropy}

\begin{lemma}
\label{lemma:entropy}
For every $p \in (0, 1/2]$,
\[ \frac{H(p)}{6 \log_2 2/H(p)} \leq p \leq \frac{H(p)}{\log_2 1/H(p)}. \]
\end{lemma}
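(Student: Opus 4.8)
The goal is to prove the two-sided estimate
\[
\frac{H(p)}{6 \log_2 (2/H(p))} \leq p \leq \frac{H(p)}{\log_2 (1/H(p))}
\]
for $p \in (0, 1/2]$, where $H(p) = -p\log_2 p - (1-p)\log_2(1-p)$ is the binary entropy function. The plan is to work directly from the definition of $H(p)$ and obtain clean one-sided bounds on $H(p)$ in terms of $p$, then invert them.

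\textbf{Upper bound on $p$.} I would first establish that $H(p) \geq p \log_2(1/p)$ for $p \in (0,1/2]$. This follows since $H(p) = p\log_2(1/p) + (1-p)\log_2(1/(1-p)) \geq p\log_2(1/p)$, because the second term is nonnegative. Writing $q = H(p)$, this gives $q \geq p\log_2(1/p)$. To conclude $p \leq q/\log_2(1/q)$ it suffices to show $\log_2(1/q) \leq \log_2(1/p)$, i.e. $q \geq p$; and indeed $H(p) \geq p$ on $(0,1/2]$ (for instance because $H(p) \geq p\log_2(1/p) \geq p$ when $p \leq 1/2$, using $\log_2(1/p) \geq 1$). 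Combining, $p \leq q/\log_2(1/p) \leq q/\log_2(1/q)$, which is the claimed upper bound.

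\textbf{Lower bound on $p$.} For the other direction I need an upper bound on $H(p)$ of the form $H(p) \leq C\, p\log_2(2/p)$ (or similar), which after inversion yields $p \geq H(p)/(6\log_2(2/H(p)))$. The standard estimate is $H(p) \leq p\log_2(e/p) \leq p\log_2(4/p)$ for $p \in (0,1/2]$: indeed $(1-p)\log_2(1/(1-p)) \leq (1-p)\cdot\frac{p}{(1-p)\ln 2} = p/\ln 2 = p\log_2 e$, so $H(p) \leq p\log_2(1/p) + p\log_2 e = p\log_2(e/p)$. Setting $q = H(p)$, we get $q \leq p\log_2(e/p) \leq p\log_2(4/p)$. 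Now I want to turn this into $p \geq q/(6\log_2(2/q))$. Since $q \geq p$ (as shown above), we have $\log_2(2/p) \geq \log_2(2/q)$ is the wrong direction, so instead I bound $\log_2(4/p)$ in terms of $\log_2(2/q)$: from $q \leq p\log_2(4/p)$ and $p \leq 1/2$, a short computation shows $p \geq q/\log_2(4/p)$ and then $\log_2(4/p) \leq 6\log_2(2/q)$ follows for $p\in(0,1/2]$ — this is where I would check the constant carefully, as it is the one genuinely fiddly step. Roughly: $4/p \leq (4/q)\log_2(4/p)$, so $\log_2(4/p)$ is at most a modest multiple of $\log_2(4/q)$, and since $q \leq 1$ we have $\log_2(4/q) = 2 + \log_2(1/q) \leq 3\log_2(2/q)$ (valid because $\log_2(1/q)\geq 0$ forces $\log_2(2/q) \geq 1$), giving the factor $6$ with room to spare.

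\textbf{Main obstacle.} The analytic content is elementary; the only real work is pinning down the constant $6$ in the lower bound so that the chain of inequalities closes for all $p \in (0,1/2]$, including the awkward regime where $p$ is close to $1/2$ and $H(p)$ is close to $1$ (so $\log_2(2/H(p))$ is close to $1$ and the bound is tightest). I would verify the boundary case $p = 1/2$, $H = 1$ directly — there the claim reads $1/6 \leq 1/2 \leq 1$, which holds — and then argue monotonicity or just use the slack in the constant to cover the interior. An alternative, cleaner route that avoids tracking constants is: for the lower bound, use that $p \mapsto H(p)$ is increasing and concave on $(0,1/2]$, compare with the linear-log bounds at the endpoints, and substitute $q=H(p)$ into the already-proven relation $q \leq p\log_2(4/p)$, then solve the resulting transcendental inequality for $p$ using the elementary fact that $x \leq y\log_2 x \implies x \leq 2y\log_2 y$ for $y$ bounded away from small values. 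I expect the first approach to be shorter to write.
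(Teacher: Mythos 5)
Your overall strategy parallels the paper's: sandwich $H(p)$ between $p\log_2(1/p)$ and a bound of the form $C\,p\log_2(\cdot/p)$, then convert the $\log_2(1/p)$ factor into a logarithm of $1/H(p)$. Your upper-bound argument is complete and correct: $H(p)\ge p\log_2(1/p)$ together with $H(p)\ge p$ gives $p\le H(p)/\log_2(1/p)\le H(p)/\log_2(1/H(p))$. The paper obtains the same bound by applying $H(p)\ge p\log_2(1/p)$ twice and using $\log_2(1/p)\ge 1$, which is essentially the same manipulation. (Minor slip: at $p=1/2$ the upper bound reads $1/2\le 1/\log_2 1$, i.e. it holds only vacuously, not as "$1/2\le 1$"; this is harmless.)

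The gap is exactly where you flag it, and as sketched it does not close. Writing $q=H(p)$ and $L=\log_2(4/p)$, your route needs $L\le 6\log_2(2/q)$, and since you spend a factor $3$ on $\log_2(4/q)\le 3\log_2(2/q)$, the "modest multiple" must be $2$, i.e. $L\le 2\log_2(4/q)$, which is equivalent to $H(p)^2\le 4p$. That inequality is true on $(0,1/2]$, but it does not follow from the rough step you give: from $2^{L}\le (4/q)L$ and $L\ge 3$ one only gets $L\le \log_2(4/q)+\log_2 L\le \log_2(4/q)+\tfrac{\log_2 3}{3}L$, hence $L\le 2.12\,\log_2(4/q)\le 6.4\,\log_2(2/q)$ — just missing the constant $6$, so there is no "room to spare" along that line. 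To finish, you need the extra fact $H(p)^2\le 4p$ (for instance: $H(p)^2\le 1\le 4p$ when $p\ge 1/4$, and $H(p)^2\le (p\log_2(4/p))^2\le 4p$ when $p\le 1/4$, because $p(\log_2(4/p))^2$ is increasing there and equals $4$ at $p=1/4$), after which your chain gives the lemma. For comparison, the paper avoids this case analysis by applying $H(p)\le 2p\log_2(1/p)$ twice and disposing of the inner $\log_2(1/p)$ via $\log_2(1/p)\le (2/H(p))^2$, which turns the bound directly into $\tfrac{2}{H(p)}\log_2\bigl(8/H(p)^3\bigr)=\tfrac{6}{H(p)}\log_2\bigl(2/H(p)\bigr)$.
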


The upper bound on $p$ follows from the inequality $H(p) \geq p\log_2 1/p$. Applying twice we obtain
\[ \frac1p \geq \frac{1}{H(p)} \log_2 \frac1p \geq \frac{1}{H(p)} \log_2 \Bigl(\frac{1}{H(p)} \log_2 \frac1p\Bigr) 
\geq \frac{1}{H(p)} \log_2 \frac{1}{H(p)} \]
because $1/p \geq 2$. For the lower bound, we apply $H(p) \leq 2p\log_2 1/p$ twice to obtain
\[ \frac{1}{p} \leq \frac{2}{H(p)} \log_2 \frac{1}{p} \leq \frac{2}{H(p)} \log_2 \Bigl(\frac{2}{H(p)} \log_2 \frac{1}{p}\Bigr). \]
Now $2/H(p) \geq (1/p) \log_2(1/p) \geq \sqrt{\log_2(1/p)}$, which is true for every $p \in (0, 1]$. Therefore
\[ \frac{1}{p} \leq \frac{2}{H(p)} \log_2 \Bigl(\frac{8}{H(p)^3}\Bigr) = \frac{6}{H(p)} \log_2 \Bigl(\frac{2}{H(p)}\Bigr). 
\hfill \]

\section{Proof of Claim~\ref{claim:minent}}
\label{app:minent}

The maximum probability in $\Xbar$ is attained by those strings in $\B^n$ that have Hamming weight exactly $0.9pn$. Let $a$ be such a string. Then
\begin{align*}
\pr[\Xbar = a] &\leq \pr[X = a] + \pr[U = a] \\ &= p^{0.9pn}(1-p)^{1.1(1-p)n} + 2^{-n} \\
&\leq 2^{-nH(p)} \cdot p^{-0.1pn} + 2^{-n} \\
&= 2^{-nH(p)} \cdot 2^{0.1p\log_2(1/p)n} + 2^{-n} \\
&\leq 2^{-nH(p)} \cdot 2^{0.1nH(p)} + 2^{-n} \\
&= 2^{-0.9nH(p)} + 2^{-n} \\
&= 2^{-1.8m} + 2^{-n} \leq 2^{-1.5m}.
\end{align*}

\end{document}